\theoremstyle{plain}
\newtheorem{thm}{\protect\theoremname}
\newtheorem{lem}{Lemma}
\providecommand{\theoremname}{Theorem}
\newcommand*{\myproofname}{Proof}
\newcommand*{\id}{\mathbbm{1}}
\newcommand*{\Tr}{\textrm{Tr}}
\newcommand*{\fin}{\textrm{fin}}
\newcommand*{\In}{\textrm{in}}
\definecolor{myurlcolor}{rgb}{0,0,0.7}
\definecolor{myurlcolor1}{rgb}{0,0.7,0.1}
\definecolor{myrefcolor}{rgb}{0,0,0.7}
\begin{document}

\title{Thermodynamics of quantum switch information capacity activation}
     \author{Xiangjing Liu }
     \email{liuxj@mail.bnu.edu.cn}
\affiliation{Shenzhen Institute for Quantum Science and Engineering and Department of Physics, Southern University of Science and Technology, Shenzhen 518055, China}
\author{Daniel Ebler}\email{Ebler.Daniel1@huawei.com}
\affiliation{Theory Lab, Central Research Institute, 2012 Labs, Huawei Technology Co. Ltd., Hong Kong Science Park, Hong Kong SAR}
\affiliation{Department of Computer Science, The University of Hong Kong, Pokfulam Road, Hong Kong SAR}
\author{Oscar Dahlsten}\email{dahlsten@sustech.edu.cn}
\affiliation{Shenzhen Institute for Quantum Science and Engineering and Department of Physics, Southern University of Science and Technology, Shenzhen 518055, China}

\begin{abstract}  We address a new setting where the second law is under question:  thermalizations in a quantum superposition of causal orders, enacted by the so-called quantum switch. This superposition has been shown to be associated with an increase in the communication capacity of the channels, yielding an apparent violation of the data-processing inequality and a possibility to separate hot from cold. We analyze the thermodynamics of this information capacity increasing process. We show how the information capacity increase is compatible with thermodynamics. We show that there may indeed be an information capacity increase for consecutive thermalizations obeying the first and second laws of thermodynamics if these are placed in an indefinite order and moreover that only a significantly bounded increase is possible. The increase comes at the cost of consuming a thermodynamic resource, the free energy of coherence associated with the switch. 
\end{abstract}

\maketitle

\noindent {\bf {\em General Introduction---}}The study of the Laws of thermodynamics, including the First Law of energy conservation and the Second Law of free energy non-increase, are key to the foundations of physics~\cite{lieb2013entropy,  janzing2000thermodynamic}. Thermalizing evolutions (thermal channels) are central to thermodynamics, representing the impact of a heat bath on the system of interest~\cite{ horodecki2013fundamental, janzing2000thermodynamic, brandao2013resource, lostaglio2015quantum, brandao2015second}. Thermalizing evolutions will in general remove information about the system's earlier interactions, in line with the second law of thermodynamics. One may think of them as noisy communication channels from the past to the future. How much information can be conveyed is captured by the information capacity of the channel, a central concept in the Shannon information theory \cite{wilde2013quantum}.

Recent results \cite{ebler2018enhanced,salek2018quantum,procopio2019communication,caleffi2020,chiribella2021indefinite,chiribella2021quantum} suggest the information capacity of two consecutive channels may be increased if the order of the channels is in a quantum superposition of the two possible orders, an operation termed applying a quantum switch to the two channels~\cite{Chiribella2013}. The quantum switch extends the allowed operations in quantum communication theory~{\cite{chiribella2019quantum,kristjansson2019}, and is an example of how to create indefinite causal order \cite{oreshkov2012quantum}, an intriguing area of possible new physics~\cite{hardy2009quantum,belenchia2018quantum,henderson2020quantum}.

Because of the strong connection between information theory and thermodynamics, e.g.\ via the central role played by entropy, this phenomenon may have significance for thermodynamics. The setting wherein information capacity is increased via the quantum switch was shown to give an apparent violation of the data-processing inequality~\cite{salek2018quantum}, which is closely connected to the second law~\cite{cover1999elements}, as well as enable the separation between hot and cold~\cite{felce2020quantum, nie2020experimental, cao2021experimental}. A more general study of superpositions of thermodynamic processes is beginning~\cite{rubino2021quantum,rubino2021inferring, wood2021operational,guha2020thermodynamic}.
These intriguing results raise the question of whether the laws of thermodynamics and the information capacity increase of the switch are compatible, and if so how. Is the increased communication capacity allowed within thermodynamics, if so is there a limit to how much is allowed? Another concrete open question, posed in~\cite{felce2020quantum} concerns whether the switch is a thermodynamic resource or should be regarded as thermodynamically free. 

In this work we accordingly investigate the interplay between the laws of thermodynamics and this information capacity increase. We find that switched thermal  channels may have a small increase in information capacity, and no more, or the laws of thermodynamics would be violated. To show this we derive an {\em upper}-bound on the mutual information (a measure of communication amount) that can be established in this setting, depicted in Figure~\ref{fig:circuitmodels}. The bound  captures how activation is only possible when the switch is ON. We argue that this increase in capacity comes at a thermodynamical cost. The switch can, when a careful accounting is done, create free energy. Rather than interpreting our results as showing that the switch violates the second law of thermodynamics, we argue it should not be viewed as being in the domain of applicability of the second law: the quantum switch is a thermodynamical resource. 
  \begin{figure} 
   \centering
  \includegraphics[width=0.4\textwidth]{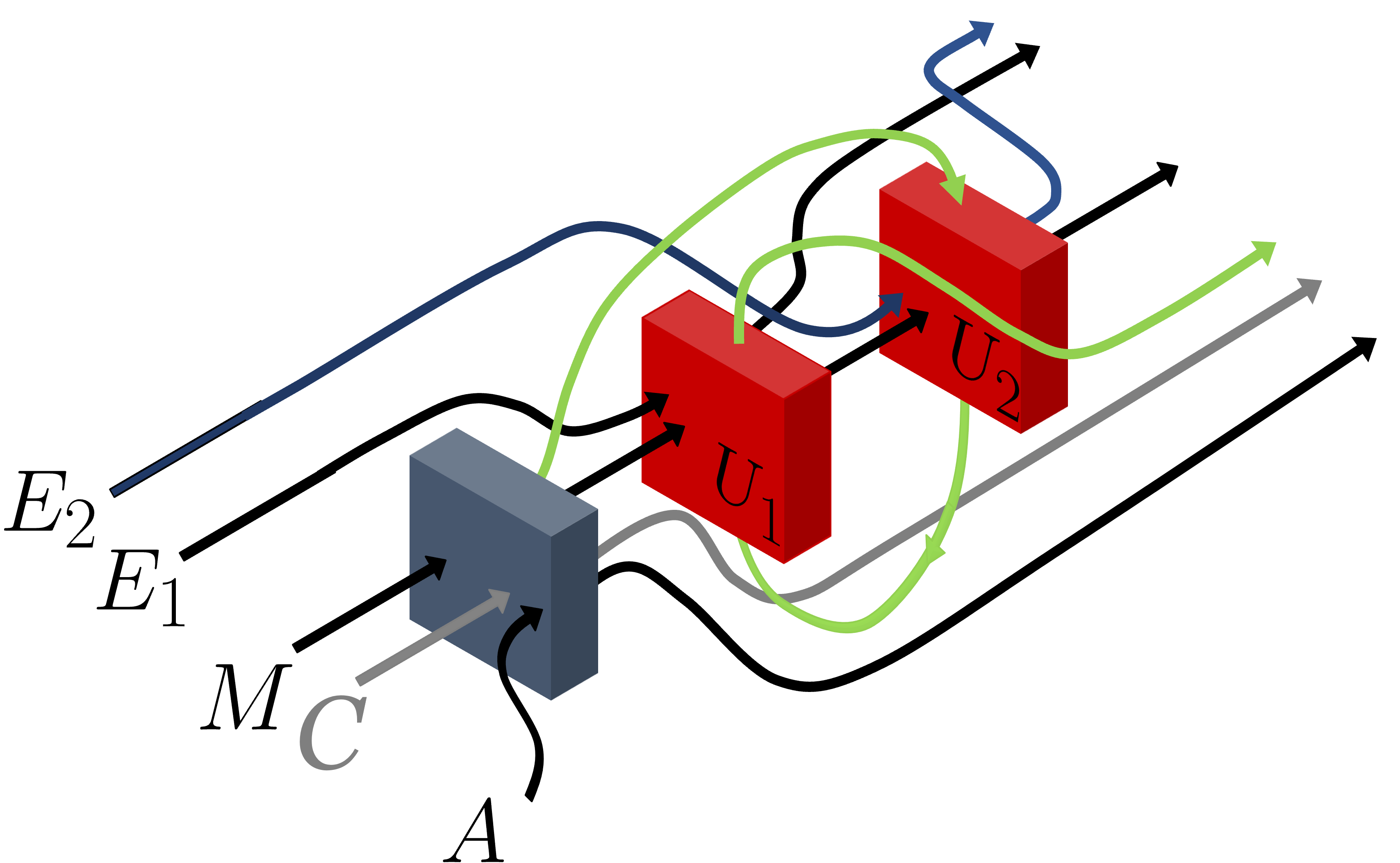} 
    \caption{The set-up. $M$ carries information about the system $A$, $E$ is from the heat bath environment. There is a unitary interaction between $M$ and $E$. If the switch is ON there is a superposition of the black and green curved-path orders for how the Kraus operators of the channel are applied to $M$, else there is a fixed order as per the black path.} 
    \label{fig:circuitmodels} 
\end{figure}

\noindent {\bf {\em Laws of thermodynamics and thermal operations---}}We begin with describing thermodynamical concepts and quantities that are key to our results. A central concept is the (Gibb's) thermal state density matrix $\tau=\exp(-H/kT)/Z$ where $H$ is the system's Hamiltonian, $T$ the temperature, $k$ Boltzmann's constant and $Z=\Tr (\exp(-H/kT))$ a normalization constant. A {\em heat bath} is normally assumed to be in the state $\tau$ as well as being so large that any system interacting with it will eventually reach a thermal state at the temperature of the bath. 

The Second Law for a system interacting with a heat bath can be presented as demanding that the relative entropy $S(\rho || \sigma ):=\Tr\left(\rho(\log\rho-\log\sigma)\right)$ between the state $\rho$ and the thermal state $\tau$ of the given system cannot increase under the heat bath's influence $\varepsilon(\rho)$, i.e., 
 \begin{align}
\label{eq:2ndlaw}
 S(\varepsilon(\rho) || \tau )- S(\rho || \tau ) \leq 0.
 \end{align} 
The free energy is a crucial related quantity: $F(\rho,H,T):= U(\rho)-kTS(\rho)=\Tr(\rho H)-kTS(\rho) $, for a system with state $\rho$, Hamiltonian $H$, temperature $T$ and the von Neumann entropy $S(\rho) :=- \Tr( \rho \ln \rho)$. The relative entropy (with natural logarithm) to the thermal state can be equivalently written in terms of the free energy: 
\begin{equation}
\label{eq:freeenergy}
kTS(\rho || \tau)=F(\rho) -F(\tau). 
\end{equation}
Thus the second law of Eq.\eqref{eq:2ndlaw} can be restated as the free energy gap to the thermal state not increasing: $\Delta F\leq 0$. Moreover, the thermal state $\tau$ has minimal free energy, so the free energy cannot increase (whereas entropy alone can decrease e.g.\ in cooling scenarios). 

The long-established terminology of calling Eq.~\eqref{eq:2ndlaw} or similar conditions the second {\em Law} can be considered to be confusing. There are maps, physically and mathematically,  which violate Eq.~\eqref{eq:2ndlaw} such as $\varepsilon(\rho)=\ket{1}\bra{1}$ with $\ket{1}$ not being the ground state (and thus not a thermal state for any temperature). A safe minimal interpretation of 
 Eq.~\eqref{eq:2ndlaw} is that it is a property that only some dynamical maps have. We demand that interactions with systems deemed to be heat baths should have this property. Then, for logical consistency, a map violating Eq.~\eqref{eq:2ndlaw} should not correspond physically to an interaction with a heat bath. 
  
We will moreover refer to the idea of {\em free} operations and {\em resources} (which are not free)~\cite{lieb2013entropy}. A key paradigm in thermodynamics is that systems at some ambient temperature $T$ are free (e.g.\ not requiring the burning of fuel) and anything that is not free is a resource. $kTS(\rho || \tau)=F(\rho) -F(\tau)$ may be thought of as the {\em currency} of thermodynamics, quantifying the resource value of a given system's state $\rho$. Accordingly, it is natural to require Eq.~\eqref{eq:2ndlaw} to be respected in order for a map to be termed thermodynamically free.

Quantum channels (completely positive and trace-preserving maps) $\varepsilon(\cdot)$ which leave the thermal (Gibbs) state invariant, $\varepsilon(\tau)=\tau$, are called {\em thermal} or {\em Gibbs-preserving}. The Gibbs-preserving condition is equivalent to Eq.~\eqref{eq:2ndlaw} (see Appendix~\ref{app:thermal2ndlaw}).  Quantum channels, including Gibbs-preserving channels, are moreover mathematically equivalent to unitarily interacting the system with an environment $E$ and then tracing over $E$~\cite{nielsen2002quantum}. To model a Gibbs-preserving channel this way, one may take $E$ to be in a thermal state initially, and apply an energy-preserving unitary $U_{ME}$ (meaning $[U_{ME}, H_M+H_E]=0$) followed by tracing over $E$. This is commonly called a thermal operation~\cite{horodecki2013fundamental, brandao2015second, brandao2013resource, janzing2000thermodynamic,cwiklinski2015limitations, lostaglio2015quantum}. (We shall later consider thermal operations in superposed orders).

\noindent{\bf {\em Quantum Switch---}}The quantum switch $\mathcal{S}$ is a supermap~\cite{chiribella2008quantum}, which takes two quantum channels $ \mathcal{C}_1,\mathcal{C}_2 $ as input and outputs another quantum channel $ \mathcal{S} ( \mathcal{C}_1, \mathcal{C}_2 )$.  Denoting the Kraus operators~\cite{nielsen2002quantum} of the channel $ \mathcal{C}_1 $ as $\{ K^{(1)}_i\}$ and $ \mathcal{C}_2 $ as $ \{ K^{(2)}_j \} $, the switched channels  $ \mathcal{S}_{\sigma_C} ( \mathcal{C}_1, \mathcal{C}_2 )$ act as
\begin{align}\label{eq: actionofswitch}
 \mathcal{S}_{\sigma_C} ( \mathcal{C}_1, \mathcal{C}_2 ) (  \rho  )= \sum_{ij} S_{ij} (\sigma_C \otimes \rho  ) S_{ij}^\dag,
\end{align}
where $\sigma_C$ is the state of the control system (also called the trajectory in the literature \cite{chiribella2019quantum,kristjansson2019}) and $S_{ij}$ denotes the Kraus operators for $ \mathcal{S}_{\sigma_C} ( \mathcal{C}_1, \mathcal{C}_2 )$, namely,
\begin{align}
S_{ij}=|0\rangle_C \langle 0| \otimes K^{(2)}_j K^{(1)}_i +  |1\rangle_C \langle 1| \otimes K^{(1)}_i K^{(2)}_j \ .
\end{align}
The quantum switch superposes the two temporal orders of the channels -- that is $\mathcal{C}_2 \circ \mathcal{C}_1$ and $\mathcal{C}_1 \circ \mathcal{C}_2$ -- with amplitudes determined by $\sigma_C$~\cite{Chiribella2013}.

The state $\sigma_C$ is very important. If $\sigma_C=\ket{+}\bra{+}$ there is a superposition of causal orders but e.g.\ if $\sigma_C=\ket{0}\bra{0}$, there is a well-defined order. To interpolate between these two extremes, we shall allow for 
$\sigma_C = \lambda \ket{+}\bra{+} + (1-\lambda ) \ket{0}\bra{0}, $
in which the free parameter $\lambda \in [0,1] $ describes to what extent the switch is ON. The alternative choice of $\ket{\psi}_C=\sqrt{\frac{\lambda}{2}}\ket{0}+\sqrt{1-\frac{\lambda}{2}}\ket{1}$ yields qualitatively similar results (Appendix A9).

\noindent {\bf {\em Model---}}We now describe the setting, depicted in Figure~\ref{fig:circuitmodels}. There is a two-level system $M$ which carries information about a system $A$. $M$ will be thermalized via unitary interactions with the $E_i$ qubits from the heat bath. We take the Hamiltonians to be $H_{E_j}=-\sigma_z\,\forall j$, and $H_M=-\vec{n} \cdot \vec{\sigma}$, where $\vec{n}$ is a three-dimensional unit vector and $\sigma$ indicates Pauli-matrices. The environmental qubits $E_i$ are initially in their thermal states  $\tau_{E_i}=q \ket{0}_{E_i}\bra{0}+ (1-q) \ket{1}_{E_i}\bra{1} $  with $q:= e^{\frac{1}{kT}}/Z \in [ 1/2, 1]$. The compound system $AM$ is initially uncorrelated with $E$ and is, for simplicity, in the state:
\begin{equation}
  \rho^\In_{AM}= p \rho_A^{(0)}\otimes \ket{\phi }_{M}\bra{\phi}+(1-p)  \rho_A^{(1)}\otimes \ket{\phi^\perp}_{M}\bra{\phi^\perp} ,
\end{equation}
 where $\ket{\phi}, \ket{\phi^\perp}$ denote the eigenstates of $H_M$.
 
To determine the dynamics, we prove (see Appendix~\ref{proof: energyconsv}), inspired by Ref.~\cite{scarani2002thermalizing},  that the energy conservation condition 
\begin{align} \label{eq: energyconserv}
  [U_{ME_j}, H_M+H_{E_j}]=0,
  \end{align}
gives a specific class of unitary evolutions for qubit collisions:
 \begin{align}\label{eq:collisionU}
 U_{ME_j}(\theta_1, \theta_2 ) = ( \tilde{U}\otimes \mathbb{I}) e^{i \theta_1 S }e^{i\theta_2\frac{ \sigma_z \otimes \sigma_z}{2}}   (\tilde{U}^\dag\otimes \mathbb{I}),
 \end{align}
 where $\tilde{U} $ is a  one-qubit unitary such that $\tilde{U} \sigma_z  \tilde{U}^\dag =\vec{n} \cdot \vec{\sigma}$ and 
the swap operator $S (\ket{a}\ket{b})= \ket{b}\ket{a}$. The associated Kraus operators on $M$ do not commute (see Appendix~\ref{krausdecomp}). Since $U(\theta):= e^{i \theta S}=\cos(\theta) \mathbb{I} +i \sin(\theta) S$, we see $\sin(\theta):=s\in [0,1]$ quantifies the thermalisation strength (and denote $\cos(\theta):=c)$.

The overall interaction, as depicted in Figure \ref{fig:circuitmodels}, is then the switched version of the unitaries $U_{ME_1}(\theta)$ and $U_{ME_2}(\theta)$: 
 \begin{align}\label{totalunitary}
 L\!=\!\ket{\psi }_c\bra{\psi } \otimes U_{ME_2} U_{ME_1} \!+\! \ket{\psi^\perp }_c\bra{\psi^\perp}\otimes U_{ME_1} U_{ME_2} ,
 \end{align}
 where $ \ket{\psi}, \ket{\psi^\perp} $ are eigenvectors of $H_C$. $L$ is unitary and energy-preserving (see Appendix~\ref{sec: energyconsv}).

\noindent {\bf {\em Bound on information capacity from thermodynamics---}} We now derive an upper bound 
 on the mutual information of channels that are Gibb's-preserving and energy-preserving. For classical channels the mutual information $I(A:B)$ between an input-record A and output B (optimized over inputs $A$) can be interpreted as the optimal communication rate, termed the channel's information capacity~\cite{cover1999elements}. The quantum mutual information of the input-record and the output of a quantum channel, $I(A:B):=S(A)+S(B)-S(AB)$, can be interpreted similarly~\cite{wilde2013quantum}. Quantum channels followed by measurements induce classical channels whose optimized mutual information (the so-called accessible information of the quantum channel~\cite{wilde2013quantum}) is upper bounded by the optimized pre-measurement $I(A:B)$. Thus our bound on $I(A:B)$ can be interpreted as a bound on the classical information capacity.

\begin{thm}[Bound on information capacity from thermodynamics]
\label{thm:main} Consider the mutual information between the message system $M$ and a record of the message $A$. Initially it is $I(A: M^\In)$ and finally, after the quantum switched energy-preserving and Gibbs-preserving channels, it is $I(A: CM^{ \fin })$, where $C$ is the control system. Then 
 \begin{align}
 I(A: CM^{ \fin }) \leq & ( c^4+ \lambda  \left(q^2-q+\frac{1}{2}\right)s^4)  I(A: M^\In) \nonumber\\
 &  +\lambda G_{\geq 0},
\end{align}
where $\lambda\in [0,1]$ represents how much the switch is ON, $s$ is the thermalisation strength ($s^2+c^2=1$) and $q$ is the ground state probability of the environment thermal state. 
$G_{\geq 0}$ is a non-negative function of $s$ and $q$ given in Eq.~\eqref{eq: virtualmutual} of Appendix~\ref{infocapbound}.
 \end{thm}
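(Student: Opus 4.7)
My plan is to prove the bound by first decomposing the switched-channel action linearly in the control state, then evaluating the resulting mutual information term by term using the explicit $c,s$ collision parametrisation of Eq.~\eqref{eq:collisionU}. Writing $\sigma_C=(1-\lambda)\ket{0}\bra{0}+\lambda\ket{+}\bra{+}$, linearity of $\map{S}_{\sigma_C}$ in $\sigma_C$ yields $\rho^\fin_{ACM}=(1-\lambda)\,\ket{0}\bra{0}_C\otimes(\map{C}_2\circ\map{C}_1)(\rho^\In_{AM})+\lambda\,\rho^{\mathrm{ON}}_{ACM}$. Since $\rho^\In_{AM}$ is block-diagonal with respect to the $H_M$-eigenbasis with branches $\rho_A^{(a)}$ and weights $p_a$, the final joint state inherits the same classical index on $A$, so the mutual information reduces to a Holevo-type expression $I(A:CM^\fin)=S(\rho^\fin_{CM})-\sum_a p_a\,S(\rho^\fin_{CM|a})$, which I can then bound for each branch separately.

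Next I would expand each collision unitary as $c\,\mathbb{I}+is\,S$, so that two consecutive collisions decompose into four amplitude branches carrying weights $c^4,\ c^2 s^2,\ c^2 s^2,\ s^4$ according to whether each collision leaves $M$ untouched or swaps it with a thermal bath qubit. In the OFF branch ($\sigma_C=\ket{0}\bra{0}$) the four outcomes combine incoherently: only the all-no-swap outcome preserves $M$ intact and transmits the full $I(A:M^\In)$, while the swap outcomes replace $M$ by a bath marginal that is $A$-independent. Using concavity of $S$ to lower-bound $S(\rho^\fin_{CM|a})$ and subadditivity to upper-bound $S(\rho^\fin_{CM})$, I expect to obtain $I|_{\mathrm{OFF}}\le c^4\,I(A:M^\In)$.

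For the ON branch ($\sigma_C=\ket{+}\bra{+}$) the interference between the two temporal orders survives as off-diagonal $C$-blocks whose structure is controlled by the non-commutation of the Kraus operators listed in Appendix~\ref{krausdecomp}. I would compute $\rho^{\mathrm{ON}}_{CM|a}$ explicitly in each $A$-branch, factor out the $c^4$ no-swap piece (behaving as in the OFF branch), and treat the $s^4$ contribution carefully: the interference between the two swap orderings leaves $C$ entangled with $M$, and computing the resulting $A$--$CM$ correlation extracts the prefactor $(q^2-q+\tfrac{1}{2})$, with $q$ entering through the two-bath-qubit overlap. The $s^2$ cross terms and the residual $A$-independent interference contributions then assemble into the non-negative additive remainder $G_{\geq 0}$ of Eq.~\eqref{eq: virtualmutual} in Appendix~\ref{infocapbound}.

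Combining the two branches with weights $(1-\lambda)$ and $\lambda$ recovers the stated bound, with the $c^4$ prefactor shared across both branches and only the $(q^2-q+\tfrac{1}{2})s^4$ and $G$ terms carrying an explicit $\lambda$. The principal difficulty will be the ON-branch step: the off-diagonal $C$-blocks do not fit the standard data-processing framework, so I have to evaluate the Holevo quantity directly and isolate the part linear in $I(A:M^\In)$ from an additive residue, a bookkeeping exercise that relies on the two-qubit collision algebra from Eq.~\eqref{eq:collisionU} rather than any generic entropic inequality.
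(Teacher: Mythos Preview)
Your overall decomposition into OFF and ON pieces matches the paper, but two of the key technical steps do not go through as written. First, the claimed identity $I(A:CM^{\fin})=S(\rho^{\fin}_{CM})-\sum_a p_a S(\rho^{\fin}_{CM|a})$ is \emph{not} an equality: it would require the $\rho_A^{(a)}$ (or the output branches $\sigma^{(a)}_{CM}$) to have orthogonal supports, which is not assumed. At best this Holevo quantity is an upper bound, but then your concavity/subadditivity manipulations control it in terms of $CM$-entropies such as $H(p)$, not in terms of $I(A:M^{\In})=S(\rho_A)-\sum_a p_a S(\rho_A^{(a)})$, which is the quantity in the theorem. The paper never computes entropy differences directly; its engine is the mixture lemma of Eq.~\eqref{eq:prodstatebound}, proved via a flag ancilla and data processing: whenever the joint state has the form $p\,\rho^{\In}_{AB}+(1-p)\,\rho^{\In}_A\otimes\sigma_B$, one gets $I(A:B)\le p\,I(A:B^{\In})$. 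This is what produces the clean $c^4$ and $s^4$ prefactors multiplying $I(A:M^{\In})$; concavity plus subadditivity will not deliver that.

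Second, your assertion that the ON branch carries ``off-diagonal $C$-blocks'' that fall outside standard data processing is the opposite of what happens. Because the two switched channels are identical, summing over Kraus indices makes the $\ket{0}\!\bra{1}$ and $\ket{1}\!\bra{0}$ blocks equal, so $\rho^{\lambda=1}_{CMA}$ is block-diagonal in $\{\ket{+}_C,\ket{-}_C\}$ (Eq.~\eqref{eq: lambda1}). The paper then introduces a three-level ancilla $B$ tagging the $\ket{+}$-ON, $\ket{-}$-ON and OFF pieces, applies the chain rule $I(A:CM)\le I(A:B)+I(A:CM|B)$, and bounds each conditional term (and $I(A:B)$ itself) using Eq.~\eqref{eq:prodstatebound} with further ancillas $D_0,D_1,D_2$; the remainder $G_{\ge 0}$ is precisely $P_0I(A:D_0)+P_1I(A:D_1)$ divided by $\lambda$. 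Your ``combine with weights $(1-\lambda)$ and $\lambda$'' step also needs this ancilla: mutual information is not convex, and $\ket{0}_C$ is not orthogonal to $\ket{\pm}_C$, so the branches cannot simply be added. The ancilla--chain-rule machinery is not optional bookkeeping here; it is the mechanism that produces a bound linear in $I(A:M^{\In})$ at all.
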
 
Before describing the proof, we note that the bound captures that a certain amount of increase in mutual information is possible from from the switch being ON: the quantities multiplied by $\lambda$. Moreover only a limited amount of increase is allowed from the switch. Two extreme temperature cases illustrate this: (i) if the temperature of the heat bath is infinite ($q=1/2$),  
\begin{equation}
I(A:CM^\text{fin}) \leq (c^4+  \frac{ \lambda }{4} s^4) I(A:M^\text{in }),
\end{equation}
 and (ii) if the temperature of the heat bath is zero ($q=1$), 
 \begin{equation}
  I(A: CM^{ \text{fin} })  \leq   ( c^4+  \frac{\lambda}{2}s^4)  I(A: M^\text{in}) +\lambda G_{\geq 0}.
 \end{equation} 
It follows, for example, that for for infinite temperature, energy conserving Gibbs-preserving channels can at most have an increase $\frac{ \lambda }{4} s^4 I(A:M^\text{in })\leq \frac{1 }{4}I(A:M^\text{in })$. We moreover see that the low temperature allows for more information capacity as well as a greater difference between the cases where the switch is ON and OFF. This is consistent with how for the infinite temperature case only, $I(A:C)=0$ (as per Eq.~\eqref{eq: mutualAC} in the Appendix).  The bound is tight for $s=0$ (see Figure~\ref{fig:simulations}). An example of a pair of channels--one of which is not energy-preserving--{\em violating} the bound is given in Appendix~\ref{app:decoherent}, demonstrating that the bound adds a significant restriction.
  \begin{figure}    
   \centering
  \includegraphics[width=0.48\textwidth]{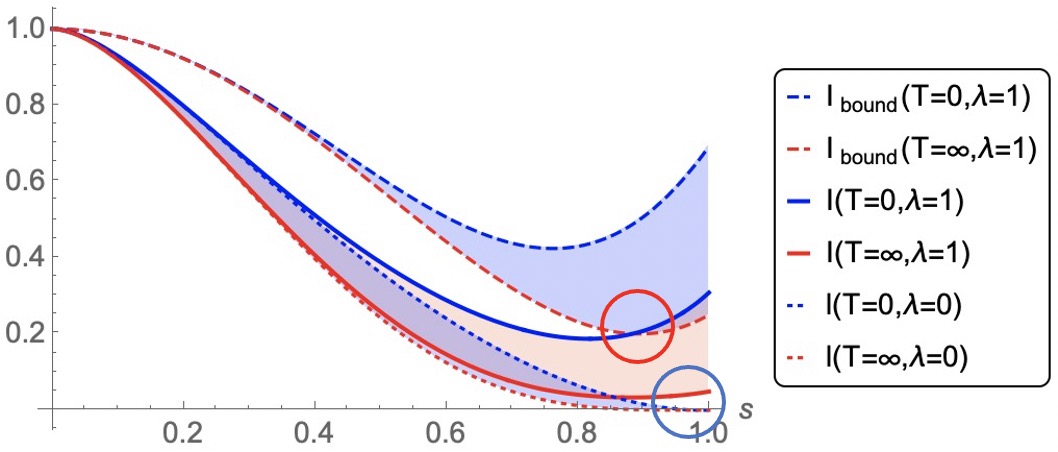} 
    \caption{We derived, from the first and second laws, a general bound on the final mutual information between a copy of the input message and the output after the switch $I(A:CM^{\fin})$. Here we show the bound together with a curve that is realizable for two different temperatures. $s\in [0,1]$ is the thermalization strength and $\lambda \in [0,1]$ the extent to which the switch is ON. The red circle proves that low temperature can outperform high temperature in terms of mutual information and the blue circle illustrates the increased mutual info when the switch is ON.} 
    \label{fig:simulations} 
\end{figure}

To illustrate the approach to proving Theorem~\ref{thm:main} (see Appendix~\ref{infocapbound} for the full proof), consider the case given in~\cite{ebler2018enhanced} corresponding to an infinite temperature heat bath with the quantum switch fully ON ($q=1/2,\lambda=1$). Firstly apply the dynamics given in Eq.~\eqref{totalunitary} to the initial state $\sigma_C \otimes \rho^\In_{AM}$. The final state $\rho^\fin_{CAM}$ then turns out to obey
\begin{align}
 \rho^\fin_{CAM} =& \ket{-}_C\bra{-} \otimes \left( k \left(\sigma_x^M \rho^\In_{AM} \sigma_x^M\right) +   l \left(\rho^\In_A  \otimes \frac{ \mathbb{I}_M }{2} \right) \right)\nonumber \\  
 +& \ket{+}_C\bra{+} \otimes \left(m (\rho^{\text{in }}_{AM})+n   (\rho^\In_A\otimes \tau_M)  \right),
\end{align} 
 where $\rho_A^\In:=\Tr_M \rho^\In_{AM}$ and $k$,$l$,$m$ and $n$ are all non-negative real numbers. This makes the state amenable to conditioning on $C$, such that one can employ the mutual information (chain) rule
\begin{align}\label{eq: mchainrule}
I(A:CM^\fin)=I(A:C) + I(A:M^\fin |C).
\end{align}
This breaks the task into calculating $I(A:C)$ as well as $I(A:M^\fin |C)=p(\ket{+}_C) I(A:M^\fin |\ket{+}_C)+p(\ket{-}_C)I(A:M^\fin |\ket{-}_C)$.
The reduced states in question turn out to be mixtures of product states with another state with the same reduced state on A, i.e.
$\rho^{\text{fin}}_{AB}=p\rho^{\text{in}}_{AB}+ (1-p)\rho^{\text{in}}_A \otimes \sigma_B $. We show (via the data processing inequality) that for such states
\begin{align}
\label{eq:prodstatebound}
I(A:B^\text{fin}) \leq p I(A:B^\text{in}) .
\end{align}
We use Eq.~\eqref{eq:prodstatebound} to upper bound the two terms in the RHS of Eq.~\eqref{eq: mchainrule}. (The argument details, including the extension to arbitrary temperature of the heat bath and $\lambda \in [0,1]$, are given in Appendix~\ref{infocapbound}).

\noindent {\bf {\em Information capacity increase consumes a thermodynamical resource---}}We now argue that $\mathcal{S}_{\sigma_C} ( \mathcal{C}_1, \mathcal{C}_2 ) (  \cdot  )$, where  $\mathcal{C}_1$ and $\mathcal{C}_2$ are both Gibbs-preserving channels leaving the same thermal state invariant, should not be regarded as thermodynamically free,  except for the case of $\sigma_C=\tau_C$, the thermal state of $C$ for the ambient temperature. To formalise this argument we firstly consider the impact of $\mathcal{S}_{\sigma_C}$ on $M$, before including the impact on $C$. 
 
\begin{thm}[Switched Gibbs-preserving channels is a Gibbs-preserving channel on $M$] \label{superchan}
If $\mathcal{C}_1(\tau_M)=\mathcal{C}_2(\tau_M)=\tau_M$, where $\tau_M$ is the thermal state on $M$ then, for any state $\sigma_C$, $Tr_C\mathcal{S}_{\sigma_C} ( \mathcal{C}_1, \mathcal{C}_2 ) ( \tau_M )=\tau_M$. 
\end{thm}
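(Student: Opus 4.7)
The plan is to compute $\Tr_C \mathcal{S}_{\sigma_C}(\mathcal{C}_1,\mathcal{C}_2)(\tau_M)$ directly from the definition in Eq.~\eqref{eq: actionofswitch} and show that tracing out the control collapses the switched channel to a convex combination of the two ordinary compositions $\mathcal{C}_2 \circ \mathcal{C}_1$ and $\mathcal{C}_1 \circ \mathcal{C}_2$, each of which preserves $\tau_M$ by hypothesis.

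First I would exploit the block-diagonal structure of the switch Kraus operators $S_{ij}=|0\rangle_C\langle 0|\otimes K^{(2)}_j K^{(1)}_i + |1\rangle_C\langle 1|\otimes K^{(1)}_i K^{(2)}_j$. Writing $\sigma_C=\sum_{ab}\sigma_{ab}|a\rangle\langle b|$, the conjugation $S_{ij}(\sigma_C\otimes\tau_M)S_{ij}^\dagger$ expands into four terms indexed by the rank-one projectors $|a\rangle\langle a|\sigma_C|b\rangle\langle b|$ on the control side, tensored with the corresponding products of Kraus operators on $M$. The key observation is that only the $(a,b)\in\{(0,0),(1,1)\}$ terms survive the partial trace over $C$: the off-diagonal pieces $|0\rangle\langle 1|$ and $|1\rangle\langle 0|$ are traceless, so all coherence in $\sigma_C$ drops out.

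Next I would collect the surviving diagonal terms and sum over $i,j$. This yields
\begin{equation}
\Tr_C \mathcal{S}_{\sigma_C}(\mathcal{C}_1,\mathcal{C}_2)(\tau_M)
= \sigma_{00}\,(\mathcal{C}_2\circ\mathcal{C}_1)(\tau_M) + \sigma_{11}\,(\mathcal{C}_1\circ\mathcal{C}_2)(\tau_M),
\end{equation}
where I have used $\sum_i K^{(1)}_i (\cdot) K^{(1)\dagger}_i = \mathcal{C}_1(\cdot)$ and similarly for $\mathcal{C}_2$. By the hypothesis $\mathcal{C}_1(\tau_M)=\mathcal{C}_2(\tau_M)=\tau_M$, both compositions return $\tau_M$, and since $\sigma_{00}+\sigma_{11}=\Tr(\sigma_C)=1$, the right-hand side equals $\tau_M$.

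There is no real obstacle here: the result is essentially a bookkeeping exercise that isolates how the switch, once the control is discarded, acts incoherently as a probabilistic mixture of the two causal orders. The only point to be careful about is verifying that the cross-terms between the two branches in $S_{ij}$ genuinely vanish under $\Tr_C$ for arbitrary (not necessarily diagonal) $\sigma_C$; this is precisely what makes the statement hold for every $\sigma_C$, not just diagonal ones, and it is what underlies the later observation that the nontrivial thermodynamic and informational content of the switch is carried by $C$ rather than by $M$ alone.
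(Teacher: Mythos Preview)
Your proposal is correct and follows essentially the same approach as the paper's own proof: expand $S_{ij}(\sigma_C\otimes\tau_M)S_{ij}^\dagger$, observe that the off-diagonal control blocks vanish under $\Tr_C$, and reduce to $\sigma_{00}\,\mathcal{C}_2\circ\mathcal{C}_1(\tau_M)+\sigma_{11}\,\mathcal{C}_1\circ\mathcal{C}_2(\tau_M)=\tau_M$. If anything, you are slightly more explicit than the paper in spelling out why the cross-terms disappear for arbitrary $\sigma_C$.
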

The proof of theorem~\ref{superchan}  can be found in Appendix~\ref{proofsuperchan}. A similar claim was made independently in Ref.~\cite{guha2020thermodynamic}.

Theorem \ref{superchan} leaves it open for  $\mathcal{S}_{\sigma_C}$ to be viewed as a thermodynamically free operation. However, it is important to also take the presence of $C$ into account. 

\begin{thm}[Including the control system] \label{thermalcontrol}
 (i) For the control system initially in a thermal state $\tau_C$, and $\mathcal{C}_1, \mathcal{C}_2$ being Gibbs-preserving channels, $\mathcal{S}_{\tau_C}( \mathcal{C}_1, \mathcal{C}_2 )$ leaves $\tau_C \otimes \tau_M$ invariant. (ii) For the control system initially in $\ket{+}\bra{+}$, and $\mathcal{C}_1, \mathcal{C}_2$ being Gibbs-preserving channels for infinite temperature, $\mathcal{S}_{\ket{+}}( \mathcal{C}_1, \mathcal{C}_2 )$ followed by a particular free operation does not leave the thermal state on $M$, $\tau_M$ invariant. 
\end{thm}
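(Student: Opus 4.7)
My plan is to treat the two parts separately, since they have quite different characters: part (i) is a direct computation using the diagonal structure of $\tau_C$ in the control basis that appears in the switch's Kraus operators, while part (ii) is an existence claim that I would establish by a minimal explicit counterexample.

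For part (i) I would substitute $\sigma_C=\tau_C$ into the defining formula for $\mathcal{S}_{\sigma_C}$ and exploit the fact that $\tau_C$ is diagonal in the energy eigenbasis $\{\ket{0},\ket{1}\}$ of $H_C$ that appears in the $S_{ij}$. Writing $\tau_C=\mu\ket{0}_C\bra{0}+(1-\mu)\ket{1}_C\bra{1}$, the projectors $\ket{c}_C\bra{c}$ inside $S_{ij}$ and $S_{ij}^\dagger$ only couple to the matching diagonal block of $\tau_C\otimes\tau_M$, annihilating any $\ket{0}_C\bra{1}$ or $\ket{1}_C\bra{0}$ terms. Summing the two surviving blocks over $i,j$ gives $\mu\ket{0}_C\bra{0}\otimes\mathcal{C}_2(\mathcal{C}_1(\tau_M))+(1-\mu)\ket{1}_C\bra{1}\otimes\mathcal{C}_1(\mathcal{C}_2(\tau_M))$, and two applications of the Gibbs-preserving hypothesis collapse each composed channel on $\tau_M$ back to $\tau_M$, so the output is $\tau_C\otimes\tau_M$.

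For part (ii) I would exhibit a small explicit counterexample. Take $\mathcal{C}_1=\mathcal{C}_2=\mathcal{I}$, the identity channel, which is trivially Gibbs-preserving at every temperature. The sole Kraus operator of each is $\id$, so every $S_{ij}$ collapses to $\id_{CM}$ and the defining formula yields $\mathcal{S}_{\ket{+}\bra{+}}(\mathcal{I},\mathcal{I})(\tau_M)=\ket{+}\bra{+}\otimes\tau_M$. For the \emph{particular free operation} I would propose the SWAP between $C$ and $M$ followed by discarding the first register. At infinite temperature the joint Gibbs state is $\tau_C\otimes\tau_M=\id/4$, which SWAP leaves invariant (so SWAP is Gibbs-preserving), and the partial trace sends $\id/4$ to $\id/2=\tau_M$; the composite is therefore a free operation $CM\to M$. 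Its image of $\ket{+}\bra{+}\otimes\tau_M$ is $\ket{+}\bra{+}$, which is not $\tau_M=\id/2$.

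The main conceptual obstacle is exactly what the theorem is designed to expose: identifying a free operation that acts trivially on $\tau_C\otimes\tau_M$ but nevertheless unveils the off-thermal coherence carried by $\sigma_C=\ket{+}\bra{+}$. SWAP-and-trace suffices because it simply relocates the coherent control onto $M$'s register, converting the free energy $F(\ket{+}\bra{+})-F(\tau_C)=kT\ln 2$ originally stored in the switch into free energy on $M$. Together with part (i), this shows that while a switch \emph{with} thermal control is a legitimate free operation, treating a switch with coherent control as free would be inconsistent with the second law---it would produce free energy from nothing when composed with the trivial identity channels and the free SWAP---so the coherently-controlled switch must be counted as a thermodynamic resource.
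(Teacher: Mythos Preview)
Your argument for part (i) is essentially identical to the paper's: write $\tau_C$ as a convex mixture of $\ket{0}_C\bra{0}$ and $\ket{1}_C\bra{1}$, observe that the projectors in $S_{ij}$ kill the cross terms, and reduce each diagonal block to a composition $\mathcal{C}_a\circ\mathcal{C}_b$ acting on $\tau_M$, which the Gibbs-preserving hypothesis leaves fixed.

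For part (ii) your proof is correct but takes a different route from the paper. You choose $\mathcal{C}_1=\mathcal{C}_2=\mathcal{I}$, so the switch reduces to the trivial map $\rho_M\mapsto\ket{+}_C\bra{+}\otimes\rho_M$, and then SWAP-and-trace relocates the $\ket{+}$ onto $M$. The paper instead uses the \emph{fully thermalizing} channels from its model (partial swap with $s=1$, $q=1/2$), computes via Eq.~\eqref{eq: lambda1} that $\mathcal{S}_{\ket{+}}(\mathcal{C},\mathcal{C})(\tau_M)$ has a nontrivial $C$-marginal, and obtains $\Tr_C\bigl(S_{CM}\,\mathcal{S}_{\ket{+}}(\mathcal{C},\mathcal{C})(\tau_M)\,S_{CM}^{\dagger}\bigr)=\tfrac{3}{8}\ket{+}_M\bra{+}+\tfrac{5}{8}\ket{-}_M\bra{-}$. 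Both examples use the same free operation (SWAP, justified in the paper as energy-preserving in the model and by you as Gibbs-preserving at infinite temperature---both valid). Your construction is shorter and self-contained, but it somewhat obscures the physics: with identity channels the ``switch'' literally does nothing beyond appending the coherent ancilla, so the conclusion is close to the tautology that a prepared $\ket{+}$ is a resource. The paper's choice of maximally noisy channels makes the sharper point that even when each $\mathcal{C}_i$ individually erases everything, the coherently controlled combination still exports free energy---tying part (ii) directly to the information-capacity-activation phenomenon that motivates the paper.
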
 

Statement (i) in Theorem \ref{thermalcontrol} follows because $\tau_C \otimes \tau_M$  can be interpreted as a mixture of two cases $\sigma_C=\ket{0/1}\bra{0/1}$. By inspection, from the definition of the switch $\mathcal{S}$, in each case $\sigma_C$ is unchanged by the switch. Moreover the thermal state on $M$ is unchanged since only Gibbs-preserving channels are applied to $M$. Hence it appears fair to call $\mathcal{S}_{\tau_C} ( \mathcal{C}_1, \mathcal{C}_2 ) (\cdot)$ a  thermodynamically free operation, at least in so far as Eq.\eqref{eq:2ndlaw} is demanded. 

If the control system is instead in $\ket{+}$ as in statement (ii), things are subtly but importantly different. Consider for simplicity the case of infinite temperature (or null Hamiltonians) and full swap ($s=1$). Whilst Theorem \ref{superchan} implies $\tau_M$ is invariant it is a priori not sensible to ask whether $\tau_C\otimes \tau_M$ is invariant in this case, since $\sigma_C\neq \tau_C$. However we can answer the question of whether it is a free operation in another way: acting on the final $CM$ with a certain free operation changes $M$ into something that is different to $\tau_M$. The swap unitary $S_{CM}$ should be treated as free here, as discussed in the model section. Now, 
\begin{align}
&\Tr_C(S_{CM}\circ \mathcal{S}_{\ket{+}} (\mathcal{C}_1, \mathcal{C}_2 ) (\tau_M))\nonumber \\
&=\frac{3}{8}\ket{+}_M\bra{+}+\frac{5}{8}\ket{-}_M\bra{-}
\neq \frac{1}{2}\id = \tau_M,
\end{align} 
as shown in Eq.~\eqref{eq: lambda1}. Thus the switched Gibbs-preserving channels under such an initial state on the control system should not be termed a free thermal operation (else one would need to declare a violation of the Second Law of Eq.~\eqref{eq:2ndlaw}).

The above argument is analogous to Bennett's famous exorcism of the Szilard-engine Maxwell's demon~\cite{bennett2003notes}. Bennett's argument contains two (qu)bits (sometimes one of them is a trit but for simplicity we take two qubits): the demon's memory $A$ and the working medium system $B$. Initially $\rho_{AB}=\ket{0}_A\bra{0}\otimes \id_B/2$. The demon performs its famous measurement modelled as a $\mathrm{CNOT}=\id_A\otimes \ket{0}_B\bra{0}+(\sigma_x)_A\otimes \ket{1}_B\bra{1}$. Bennett notes, crucially\cite{bennett2003notes}, that the CNOT operation may in principle be thermodynamically free, considering that the initial and final Hamiltonian may be null ($H=0$), and noting that CNOT leaves the corresponding thermal state ($\frac{1}{4}\id_{AB}$) invariant. There can now, analogously to the swap unitary above, be a second free operation, namely CNOT in the opposite direction, setting $B$ to $\ket{0}$ regardless of its initial state. Thus, the end effect is a map on $B$ respecting $\varepsilon(\rho_B=\frac{1}{2}\id=\tau_B)=\ket{0}_B\bra{0}\neq \tau_B$.  Such a map creates free energy, e.g.\ taking a thermal state to a non-thermal state, and should be treated as a resource. 

The switch can be said to harness only one type of free energy from $\sigma_C$,  known as the free energy of coherence. If $\sigma_C$ is diagonal in the computational basis, then, as discussed above, the action of the switch is trivial. Instead there must be coherence. The free energy of coherence can be defined as  $F_{coh}=F(\sigma)- F(\mathcal{D}_H(\sigma))$, where $\mathcal{D}_H(\cdot)$ diagonalises the state in the Hamiltonian eigenbasis~\cite{lostaglio2015description}. From the definition, $F=F_{coh}+F(\mathcal{D}_H(\sigma))$. $F_{coh}(\sigma_C)$ is a natural quantifier of the thermodynamic resource value of the switch. We can explicitly mathematically capture this by noting that $
F_{coh}(\sigma, H) \geq kT \frac{\lambda^2}{\ln16}$ (see Appendix A8) such that the information capacity activation in Theorem~\ref{thm:main} is directly governed by  $F_{coh}$.

\noindent{\bf {\em Relation to existing work--}}The conclusion that the switch is a thermodynamical resource does not contradict the resource theory created in Ref.~\cite{kristjansson2019}, in which the switch is taken as free, but rather shows that these are separate resource theories. The result that the switch is not thermodynamically free is consistent with the conclusion from Ref.~\cite{felce2020quantum} that it could be possible to use the switch to perform refrigeration.  The results are also consistent with Ref.~\cite{salek2018quantum} which show how the switch gives an apparent violation of the bottleneck inequality,  which holds for all channels. The bottleneck inequality (which is equivalent to the data processing inequality) for two maps $\varepsilon_1(\cdot)$ and $\varepsilon_2(\cdot)$ says that $I(A:\varepsilon_2\circ\varepsilon_1(B))\leq  I(A:\varepsilon_1(B))$. The information capacity activation phenomenon can be expressed as $I(A:\mathcal{S}_{\ket{+}}(\mathcal{C}_1, \mathcal{C}_2)(B))>  I(A:\mathcal{C}_1(B))$, indicating a possible violation of the bottle-neck inequality. As a vivid example, let $\mathcal{C}_2=\mathcal{C}_1:=\mathcal{C}$ correspond to the full swap case (s=1) such that $\mathcal{C}(\tau_M)=\tau_M$. If we denoted, carelessly, the action on $M$ of the switch channels followed by the conditioning on $C$, as $\mathcal{C}\circ \mathcal{C}$, there would indeed be a violation of Eq.~\eqref{eq:2ndlaw} in the sense that in the second application of $\mathcal{C}$ the state goes from being thermal to becoming athermal. This again highlights how the switch should be treated as a thermal resource.

\noindent{\bf {\em Summary and Outlook--}} We derived an upper bound on the information capacity of energy-preserving Gibbs-preserving channels with or without indefinite causal order. The bound captures the capacity activation associated with indefinite causal order as well as revealing that this is stronger for low temperatures. We argued that there is a thermodynamic cost to this activation (or else the Second Law is violated). 

It remains to be investigated whether the arguments here directly generalise to single-shot thermodynamics where statements concern other quantities than average work, e.g. the worst-case work~\cite{DahlstenRRV11, del2011thermodynamic, horodecki2013fundamental, aaberg2013truly}. It is intriguing that only a limited activation capacity from the switch is allowed for Gibbs-preserving channels, deserving further investigation. Our approach may help clarify what the difference is between the switch and other coherent combinations of channels~\cite{feix2015,abbott2020communication, guerin2016exponential,wood2021operational,gisin2005error,rubino2021quantum, rubino2021inferring,guerin2018shannon,chiribella2019quantum}. Finally, it would be intriguing to extend our upper bound to the case of an environment with memory, e.g.\ via the approach of Ref.~\cite{ciccarello2013collision}. 

\noindent{\bf {\em Acknowledgements--}} We acknowledge discussions with Qian Chen, Giulio Chiribella, Yutong Luo, Yi-Zheng Zhen and early discussions with David Felce, Sina Salek and Vlatko Vedral. We acknowledge support from the National Natural Science Foundation of China (Grants No. 12050410246, No. 1200509, No. 12050410245).

\bibliography{reference}

\appendix

\renewcommand{\thefigure}{A~\arabic{figure}}
\setcounter{figure}{0}
\renewcommand{\thetable}{A~\arabic{table}}
\setcounter{table}{0}
\renewcommand{\thepage}{A\arabic{page}} 
\setcounter{page}{1}
\renewcommand{\thesection}{A\arabic{section}}  
\setcounter{section}{0}

\section{Thermal state preserving condition and the second law} \label{app:thermal2ndlaw}
A thermal channel $\mathcal{E}$ is a completely positive, trace-preserving (CPTP) map that leaves the thermal state in question invariant.

Proposition: \textit{ The thermal state preserving condition, i.e., $\mathcal{E}(\tau)=\tau$,  is equivalent to the condition that the relative entropy between the state $\rho $ and the thermal state $\tau$ of the same system cannot increase, i.e., $ \Delta S(\rho || \tau):=S( \mathcal{E} ( \rho) || \tau ) - S(\rho || \tau) \leq 0 $ }.

\begin{proof}
$\Rightarrow$. According to the property of monotonicity of quantum relative entropy under CPTP maps, we have  
\begin{align}
 S(\rho || \tau) \geq S ( \mathcal{E} (\rho) ||  \mathcal{E} (\tau) ) =S ( \mathcal{E} (\rho) ||  \tau ),
 \end{align}
  where the thermal state preserving condition $ \mathcal{E} (\tau) =  \tau $ was used in the equality. This leads to 
  $$
  \Delta S(\rho || \tau) \leq 0.
  $$

$ \l\Leftarrow$. Consider initially $\rho=\tau$. Then, the change of the relative entropy is 
\begin{align}
\Delta S(\tau ||\tau)&= S(\mathcal{E} (\tau) ||\tau) -S(\tau ||\tau) \nonumber\\
&= S(\mathcal{E} (\tau) ||\tau) \leq 0.
\end{align}
Combining the above expression with the fact that relative entropy is non-negative gives us $S( \mathcal{E} (\tau) ||\tau) = 0.$ Finally, we arrive at $\mathcal{E} (\tau) =\tau $. 

\end{proof}

\section{Energy conservation gives an explicit parametrised two-qubit unitary collision model}\label{proof: energyconsv}

Consider a qubit collision model. Without loss of generality, we assume that $H_{E_i}= -\sigma_z $ and $H_s= -\vec{n} \cdot \vec{\sigma}$, and that there exists a one-qubit unitary transformation $ \tilde{U} $, such that
\begin{align}
  \tilde{U}: \,\, & \{ \ket{0},\ket{1}\} \rightarrow  \{ \ket{\phi}:=   \tilde{U}\ket{0} ,\ket{\phi^\perp }:=  \tilde{U} \ket{1}\}  ,\nonumber\\
   & \, \, \, \, \, \,\,  \,\,\, \sigma_z   \,\,\,\,\,\,\, \rightarrow \tilde{U} \sigma_z  \tilde{U}^\dag =\vec{n} \cdot \vec{\sigma} .
 \end{align}
That is to say that $ \ket{\phi}, \ket{\phi^\perp } $ are the eigenstates of $H_s$ with eigenvalues -1 and 1, respectively.

\begin{lem}\label{lem: energyconsv}
 For unitary collisions $U$ between qubits $M$ and $E_j$ obeying the energy conservation condition
 \begin{align} 
  [U_{ME_j}, H_M+H_{E_j}]=0,
  \end{align}
   it must be that $U_{ME_j}$ has the following parametrised form:
 \begin{align}\label{eq:collisionU}
 U_{ME_j}(\theta_1, \theta_2 ) = ( \tilde{U}\otimes \mathbb{I}) e^{i \theta_1 S }e^{i\theta_2\frac{ \sigma_z \otimes \sigma_z}{2}}   (\tilde{U}^\dag\otimes \mathbb{I}),
 \end{align}
 where $\tilde{U} $ is a  one-qubit unitary such that $\tilde{U} \sigma_z  \tilde{U}^\dag =\vec{n} \cdot \vec{\sigma}$ and 
the swap operator $S (\ket{\phi_1} \otimes \ket{\phi_2})= \ket{\phi_2} \otimes \ket{\phi_1}\, \forall \ket{\phi_1},\ket{\phi_2} \in \mathbb{C}^2 $.
   \end{lem}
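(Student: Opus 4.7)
The strategy is to reduce the problem to a standard basis via the local unitary $\tilde{U}$, use the block-diagonal structure forced by commutation with the total $\sigma_z$, and verify that the two-parameter ansatz exhausts the allowed freedom up to trivial local phases.

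Concretely, I would define $V := (\tilde{U}^\dagger \otimes \mathbb{I})\, U_{ME_j}\, (\tilde{U}\otimes \mathbb{I})$. Using $\tilde{U}\sigma_z \tilde{U}^\dagger = \vec{n}\cdot\vec{\sigma}$, the energy-conservation condition becomes $[V,\, \sigma_z\otimes \mathbb{I} + \mathbb{I}\otimes \sigma_z] = 0$, so $V$ preserves each eigenspace of the total $\sigma_z$. The eigenspaces are $\mathrm{span}\{\ket{00}\}$, $\mathrm{span}\{\ket{01},\ket{10}\}$, and $\mathrm{span}\{\ket{11}\}$, giving $V$ a block-diagonal form: phases $e^{i\alpha}$ and $e^{i\beta}$ on the outer one-dimensional blocks and a $U(2)$ element $W$ on the two-dimensional middle block. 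I would then compute the action of the proposed ansatz directly: since the swap $S$ acts as $+1$ on the triplet subspace and $-1$ on the singlet, while $\sigma_z \otimes \sigma_z$ acts as $+1$ on $\{\ket{00},\ket{11}\}$ and $-1$ on $\{\ket{01},\ket{10}\}$, a short diagonalisation shows that $e^{i\theta_1 S}\, e^{i\theta_2 (\sigma_z \otimes \sigma_z)/2}$ assigns phase $e^{i(\theta_1+\theta_2/2)}$ to both $\ket{00}$ and $\ket{11}$ and gives middle block $W = e^{-i\theta_2/2}\bigl(\cos\theta_1\,\mathbb{I} + i\sin\theta_1\,\sigma_x\bigr)$ (where $\sigma_x$ is written in the $\{\ket{01},\ket{10}\}$ basis). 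Conjugating back by $\tilde{U}\otimes \mathbb{I}$ then yields the form stated in Eq.~\eqref{eq:collisionU}.

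The main obstacle is a parameter-counting gap: a generic unitary commuting with the total $\sigma_z$ has six real parameters (two outer phases plus a four-parameter $U(2)$ middle block), whereas the ansatz contains only two. I would resolve this by observing that the residual three-parameter freedom is spanned by a global phase together with local $z$-rotations $e^{i\phi_M \sigma_z}\otimes e^{i\phi_E \sigma_z}$, each of which individually commutes with $H_M$ and $H_{E_j}$ and is therefore energy-preserving and thermodynamically trivial; fixing this gauge forces $\alpha=\beta$ and restricts $W$ to the one-parameter subgroup generated by $\sigma_x$ on the middle block, leaving exactly the two parameters $\theta_1,\theta_2$ appearing in Eq.~\eqref{eq:collisionU}. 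Under this identification the ansatz captures every energy-preserving qubit collision relevant to the model.
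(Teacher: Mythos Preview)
Your block-diagonal reduction is correct and more self-contained than the paper's own argument, which proceeds differently: the paper first proves that, for qubits, $[U,H_M+H_{E_j}]=0$ is equivalent to preservation of the global Gibbs state $\tau_M\otimes\tau_{E_j}$ at \emph{every} temperature (the nontrivial direction being that universal Gibbs preservation forces $U$ to act block-diagonally on the fixed-energy subspaces), and then invokes an external reference for the explicit two-parameter form. Your route avoids the outside citation; the paper's route makes the thermodynamic content of energy conservation explicit.

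Your parameter-counting resolution, however, has a genuine gap. A three-parameter gauge (global phase together with \emph{one-sided} local $z$-rotations $e^{i\phi_M\sigma_z}\otimes e^{i\phi_E\sigma_z}$) cannot reduce a general middle-block $W\in U(2)$ to the $\sigma_x$-subgroup: the family $e^{i\mu}e^{i\delta\sigma_z}e^{i\theta_1\sigma_x}$ is only three-dimensional inside the four-dimensional $U(2)$ and misses, for example, $W=e^{i(\pi/4)\sigma_y}$. Equivalently, $6-3=3$, not $2$. To close the count you must allow local $z$-rotations on \emph{both} sides of the collision, giving the Euler form $e^{i\mu}e^{i\delta_L\sigma_z}e^{i\theta_1\sigma_x}e^{i\delta_R\sigma_z}$ on the middle block while independently adjusting the two outer phases. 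Since pre- and post-collision $z$-rotations also commute with $H_M$ and $H_{E_j}$ and leave the diagonal initial states of $M$ and $E$ invariant, they are still thermodynamically trivial in this model, so your physical conclusion survives once the gauge group is enlarged.
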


\begin{proof}
First, we prove that, for qubit collisions, the energy conservation condition
 is equivalent to the global Gibbs state preserving condition, $ i.e.,$  
$$U  (\tau_s \otimes \gamma_{E_i} )U^\dag=\tau_s \otimes \gamma_{E_i}.$$
Denote $a:= e^{{\beta}}/( e^{{\beta}}+e^{{-\beta}})$.

$\Rightarrow $.  The proof is trivial for this direction. \\

$\Leftarrow$. 
The global Gibbs state preserving condition gives 
 \begin{align} \label{eq: gibbspre}
& U ( \gamma_A \otimes \tau_{E_i}) U^\dag \nonumber \\
= & U \Bigl( (a \ket{ \phi }\bra{\phi }+(1-a) \ket{\phi^\perp }\bra{ \phi^\perp } ) \otimes   \nonumber\\
&(a \ket{0}\bra{0}+(1-a) 
\ket{1}\bra{1} )\Bigl)  U^\dag \nonumber\\
=& \gamma_A \otimes \tau_{E_i} .
\end{align}
Demanding that Eq.~\eqref{eq: gibbspre} holds for all temperatures implies that the unitary interactions $U$ must preserve three subspaces: $ \ket{\phi, 0} \bra{\phi , 0},  \ket{\phi^\perp, 1} \bra{\phi^\perp ,1} $ and $ \ket{\phi ,1} \bra{\phi^\perp , 0}+ \ket{\phi^\perp , 0} \bra{\phi ,1}$. One sees that each subspace consists of the states with the same energy, which means that $U$ preserves energy, $i.e, [U, H_s+H_{E_i}]=0$ holds.

Next, we use the result from Ref~\cite{scarani2002thermalizing}, which gives an explicit parametrised form of  two-qubit unitaries obeying that global Gibbs state preserving condition.

Denote $S$ as the swap operator, $i.e.,$
\begin{align}
S=\left(                 
  \begin{array}{cccc}   
      1 & 0 &0  & 0 \\  
      0  &  0 & 1 & 0 \\
      0  &  1 & 0 & 0 \\
      0  &  0 & 0 & 1 \\  
      \end{array}
\right) .
\end{align}

 The general form of Gibbs state preserving unitaries~\cite{scarani2002thermalizing}, up to a global phase factor,  is $U(\theta_1, \theta_2):= U_{partialsw}(\theta_1) U_{phasefluct}(\theta_2)  $ where 
 \begin{align}
 & U_{partialsw}(\theta_1) = ( \tilde{U}\otimes \mathbb{I}) e^{i \theta_1 S }  (\tilde{U}^\dag\otimes \mathbb{I}) \nonumber\\
  =& c\mathbb{I} 
 +is \Bigl( \ket{\phi, 0}\bra{\phi, 0} + \ket{\phi^\perp, 1}\bra{ \phi^\perp , 1} 
+\ket{\phi, 1}\bra{\phi^\perp , 0} \nonumber\\
&+\ket{ \phi^\perp,  0}\bra{\phi ,1} \Bigl),\nonumber\\
&U_{phasefluct}(\theta_2) = (\tilde{U} \otimes \mathbb{I}) e^{i\theta_2\frac{ \sigma_z \otimes \sigma_z}{2}}  (\tilde{U}^\dag\otimes \mathbb{I}) \nonumber\\
=& e^{i\theta_2}  \ket{\phi, 0}\bra{\phi, 0} +e^{-i\theta_2}  \ket{\phi, 1}\bra{\phi, 1}  \nonumber\\
&+e^{-i\theta_2}  \ket{\phi ^\perp, 0}\bra{\phi^\perp, 0}+e^{i\theta_2}  \ket{\phi^\perp, 1}\bra{\phi^\perp, 1}.
  \end{align}
  It is straightforward to verify that $[  U_{partialsw}(\theta_1), U_{phasefluct}(\theta_2) ]=0$.
\end{proof}

 \section{Kraus decomposition of the quantum partial swap unitary}\label{krausdecomp}

Consider a unitary interaction $U_{ME}(\theta):=e^{i\theta S}=(c  \mathbb{I}+is S)$ where $S$ is the swap operator. If the heat bath is in the state $P_0=\ket{0}_E\bra{0}$, one has
\begin{align}
\tilde{E}_0=_E\bra{0}U\ket{0}_E=(c\id +is \ket{0}\bra{0}), \nonumber \\
\tilde{E}_1=_E\bra{1}U\ket{0}_E=i s \ket{0}\bra{1}  .\nonumber
\end{align}
If the heat bath in the state $P_1=\ket{1}_E\bra{1}$, one has
\begin{align}
\tilde{E}'_0=_E\bra{0}U\ket{1}_E= i  s \ket{1}\bra{0}) , \nonumber \\
\tilde{E}'_1=_E\bra{1}U\ket{1}_E=(c\id +is \ket{1}\bra{1}) . \nonumber
\end{align}

Hence, if the heat bath is in the state of  $\tau=q \ket{0}\bra{0}+(1-q)\ket{1}\bra{1}$, we have the set of Kraus operators
\begin{align}\label{eq:krausoperators}
K_1\equiv \sqrt{q} \tilde{E}_0=& \sqrt{q}   (c\id +is \ket{0}\bra{0}), \nonumber \\
K_2 \equiv \sqrt{q} \tilde{E}_1=& \sqrt{q} i  s \ket{0}\bra{1} ,\nonumber \\
K_3 \equiv \sqrt{1-q} \tilde{E}'_0=& \sqrt{1-q} i s \ket{1}\bra{0} ,\nonumber \\
K_4\equiv \sqrt{1-q} \tilde{E}'_1=& \sqrt{1-q} (c\id +is \ket{1}\bra{1}). 
\end{align}

\section{ Unitary dilation of control-system interaction is energy preserving} 
\label{sec: energyconsv}



Consider two thermal operations  $ \mathcal{E}_1, \mathcal{E}_2$ acting on the system $S$. $ \mathcal{E}_i, i=1,2$  are defined by $(U_i, \tau_{E_i})$ where $U_i$ is a unitary dilation of $\mathcal{E}_i$ satisfying
\begin{align}
[U_i, H_S+ H_{E_i}]  =0 ,
\end{align}
and 
$
 \tau_{E_i}:= e^{-\beta H_{E_i}}/Z_i 
 $
is the corresponding thermal environment state.  Further, the energy conservation condition ensures global Gibbs state conservation, i.e.,
\begin{align}
U_i ( e^{-\beta H_S} \otimes  e^{-\beta H_{E_i}} )U_i^\dag =  e^{-\beta H_S} \otimes  e^{-\beta H_{E_i}} .
\end{align}
Then, tracing out $E_i$, one arrives at $  \mathcal{E}_i(e^{-\beta H_S} ) =  e^{-\beta H_S}.$


Let  $C$ be the system controlling the orders of operations. Denote by $ \ket{\psi}, \ket{\psi^\perp} $ the two different eigenvectors of the Hamiltonian $H_C$ of the control system $C$. We then construct a unitary 
\begin{align}
 L= \ket{\psi }_C\bra{\psi } \otimes U_2  U_1 + \ket{\psi^\perp }_C\bra{\psi^\perp}\otimes U_1 U_2 . 
 \end{align}
  One can verify that, 
\begin{align}
&[L, H_C+H_S+H_{E_1}+H_{E_2}] \nonumber\\
=& \ket{\psi }_C\bra{\psi } \otimes [ U_2  U_1, H_S+H_{E_1}+H_{E_2}]  \nonumber\\
&+  \ket{ \psi^\perp }_C\bra{\psi^\perp} \otimes  [U_1  U_2, H_S+H_{E_1}+H_{E_2}] \nonumber\\
=& \ket{\psi }_C\bra{\psi } \otimes  \Bigl(U_2 [ U_1, H_S+H_{E_1}] + [ U_2, H_S  \nonumber\\
&+H_{E_2}] U_1 \Bigl) +  \ket{\psi^\perp }_C\bra{\psi^\perp} \otimes  \Bigl( U_1[  U_2, H_S \nonumber\\
&+H_{E_2}] + [U_1, H_S+H_{E_1} ]U_2 \Bigl) \nonumber\\
=&0,
\end{align}
where energy conservation conditions are used in last equality.
Thus, $L$ is an energy preserving unitary on the compound system $CSE_1E_2$.

We note for completeness that it follows that 
\begin{align}
&L( e^{-\beta H_C} \otimes  e^{-\beta H_{S}} \otimes  e^{-\beta H_{E_1}} \otimes e^{-\beta H_{E_2}} )L^\dag \nonumber\\
=& e^{-\beta H_C} \otimes  e^{-\beta H_{S}} \otimes  e^{-\beta H_{E_1}} \otimes e^{-\beta H_{E_2}} .
\end{align}


\section{Bound on information capacity from thermodynamics}
\label{infocapbound}

{\bf{Theorem~\ref{thm:main}}.} 
\textit{ For our model of two identical energy preserving thermal channels $ \{ \mathcal{C} \}$ undergoing a quantum switch, with $\lambda\in [0,1]$ representing the switch strength, $s$ the thermalisation strength, $q$ the probability of the ground state of the environment qubit thermal states, the final ({\em fin}) and initial ({\em in}) mutual informations between the untouched quantum system $A$ on the one hand and the control system $C$ plus the memory $M$ undergoing the channel on the other, obey
 \begin{align}
 I(A: CM^{ \fin }) \leq & \Bigl( c^4+ \lambda  (q^2-q+\frac{1}{2} )s^4 \Bigl)  I(A: M^\In) \nonumber\\
 &  + \lambda  G_{\geq 0},
\end{align}
where $ G_{\geq 0} := fI(A:D_0) +(1-f) I(A:D_1)$,$f=  1-  \frac{ (1-q^2)+(2q-1)(1-p)}{2} s^4 $ and the non-negative terms 
\begin{align}
I(A:D_0)=& S( p_{0} \rho_A+ (1-p_{0})  \rho^{(0)}_A )  
 - p_{0} S(\rho_A) \nonumber\\
 & -  (1-p_{0})  S(\rho^{(0)}_A ) , \nonumber\\
I(A:D_1)=& S( p_{1} \rho_A  + (1-p_{1})  \rho^{(1)}_A ) - p_{1} S(\rho_A) \nonumber\\
&-  (1-p_{1})  S(\rho^{(1)}_A ),
\end{align} 
 with $p_0 :=   (1+\frac{(q^2-2q) s^4}{2}) /f , p_1 = \frac{ (1-q^2)s^4}{2} /(1-f)$,  which involves virtual ancillary qubits $D_0$ and $D_1$ that are fully determined by the initial conditions. }

\begin{proof}

In our model, the heat bath's Hamiltonian $H_{E_j}=-\sigma_z\,\forall j$, and memory system $M$'s Hamiltonian $H_M=-\vec{n} \cdot \vec{\sigma}$, where $\vec{n}$ is a three-dimensional unit vector and $\sigma$ indicates Pauli-matrices. Without loss of generality, we assume the Hamiltonian of the two-dimensional control system $H_c=-\sigma_z$. The energy eigenstates of the control system are $\{ \ket{0}, \ket{1} \}$. 
 The initial state of the joint system  $CAM E_1E_2$ is 
 \begin{align}
&\tilde{U}_M^\dag \rho^\In \tilde{U}_M \nonumber\\
=&\sigma_C \otimes  \tilde{U}_M^\dag \rho^{\text{in}}_{AM} \tilde{U}_M \otimes  \tau_{E_1} \otimes \tau_{E_2}  \nonumber\\
 =& \Bigl(\lambda \ket{+}_c\bra{+}+(1-\lambda)\ket{0}_c\bra{0} \Bigl)  \otimes \Bigl( p \rho^{(0)}_A \otimes \ket{ 0 }_{M} \bra{0} 
\nonumber\\
&+(1-p) \rho^{(1)}_{A} \otimes \ket{1 }_{M}\bra{ 1} \Bigl) \otimes \tau_{E_1} \otimes \tau_{E_2},
\end{align} 
where $ \tilde{U} \sigma_z \tilde{U}^\dag  =\vec{n} \cdot \vec{\sigma} $ .

According to lemma~\ref{lem: energyconsv}, the energy conservation condition of thermal channels $\{ \mathcal{C}\}$ gives a parametrized form of the two-qubit unitary dilation:
\begin{align}
U_{ME_i}(\alpha,\beta) &=  ( \tilde{U}\otimes \mathbb{I}) e^{i \alpha S} e^{i\beta\frac{\sigma_z \otimes \sigma_z }{2}}  ( \tilde{U}^\dag \otimes \mathbb{I}) \nonumber\\
&:= ( \tilde{U}\otimes \mathbb{I}) U_i  ( \tilde{U}^\dag \otimes \mathbb{I}), \nonumber
\end{align}
where $S$ is the swap operator, i.e., $S \ket{\phi_1} \otimes \ket{\phi_2}= \ket{\phi_2} \otimes \ket{\phi_1}\, \forall \ket{\phi_1},\ket{\phi_2} \in \mathbb{C}^2 $.
  For the joint system $CAM E_1E_2$, the corresponding Kraus operator of $ \mathcal{S}( U_{ME_1}, U_{ME_2}) $ is \begin{align}
 L=& \ket{0}_c\bra{0} \otimes U_{ME_2}  U_{ME_1} + \ket{1}_c\bra{1}\otimes U_{ME_1} U_{ME_2} \nonumber\\
 =& \tilde{U} (\ket{0}_c\bra{0} \otimes U_2  U_{1} + \ket{1}_c\bra{1}\otimes U_{1} U_{2}) \tilde{U}^\dag \nonumber\\
 :=& \tilde{U} \tilde{L} \tilde{U}^\dag.
 \end{align}
 Then, the final state of $CAME_1E_2 $ is 
\begin{align}\label{eq: finalstate}
\rho^{\text{fin}}& = L \rho^{\text{\In}} L^\dag =  \tilde{U} \tilde{L} \tilde{U}^\dag \rho^\In \tilde{U}  \tilde{L} \tilde{U}^\dag.
\end{align}
Note that $ \tilde{L} \tilde{U}^\dag \rho^\In \tilde{U}  \tilde{L} $ is equivalent to the final state of $CAME_1E_2 $ when system $M$'s Hamiltonian $H_M= -\sigma_z$. Since we are interested in the mutual information between  $A$ and $CM$ and local unitaries do not change mutual information, we restrict ourselves to the case of $H_M=-\sigma_z$ in the following.
Moreover, $[e^{i\alpha S}, e^{i\beta\frac{\sigma_z \otimes \sigma_z }{2}}]=0$, $M$ and $E_i$ are initially diagonal in the computational basis. This implies that $e^{i\beta\frac{\sigma_z \otimes \sigma_z }{2}} $ does not impact the final states. Thus, we only need to consider $U_i=  e^{i \theta S_{ME_i}}, i=1,2,$ in the following.

 The Kraus operators of energy-preserving thermal channels $\mathcal{C}$ with a unitary dilation $U= e^{i\theta S}$ and thermal state $\tau$ are given in Appendix~\ref{krausdecomp}.
   The corresponding Kraus Operators for the resulting switched channels $  \mathcal{S}_{\sigma_C}( \mathcal{C}, \mathcal{C}) $ are 
  \begin{align}
   S_{ij}= & \ket{0}_C\bra{0} \otimes ( \mathbb{I}^A \otimes K_i^M)(\mathbb{I}^A \otimes K_j^M )  \nonumber\\
   & + \ket{1}_C\bra{1} \otimes  (\mathbb{I}^A \otimes K_j^M)(\mathbb{I}^A \otimes K_i^M ).
   \end{align}
 The final state of $CAM$ can be obtained by 
\begin{align}\label{eq: finalstateCAM}
\rho_{CMA}^\fin = \Tr_{E_1E_2} \rho^\fin = \mathcal{S}_{\sigma_C} (\mathcal{E}, \mathcal{E})(\rho_{CAM}^\In).
\end{align}
 
 \vspace{0.5cm}
{ \bf{Consider $\lambda=1$}}.  Based on Eq.~\eqref{eq: actionofswitch}, Eq.~\eqref{eq:krausoperators} and Eq.~\eqref{eq: finalstateCAM}, we further find that the final state of $CAM$ has the structure of 
 \begin{align}\label{eq: CMAstructure}
&\rho^{\text{ \fin }}_{CMA} \nonumber\\
=&  \ket{+0}_{CM}\bra{+0} \otimes \rho^{(1')}_A +\ket{+1}_{CM}\bra{+1} \otimes \rho^{(2')}_A \nonumber\\
&+  \ket{-0}_{CM}\bra{-0} \otimes \rho^{(3')}_A+ \ket{-1}_{CM}\bra{-1} \otimes \rho^{(4')}_A.
 \end{align}
 where 
 \begin{align} \label{eq: conditional1}
&  \rho^{(1')}_A  = \sum_{ij}\Tr_{CM} \ket{+0}_{CM}\bra{+0} \Bigl[ S_{ij}( \sigma_C \otimes \rho_{AM}^{\In}) S_{ij}^\dag   \Bigl]  \nonumber\\ 
 =&  \sum_{ij}\Tr_{CM} \ket{+0}_{CM}\bra{+0} S_{ij}  \Bigl(  \ket{+}_{C}\bra{+} \otimes ( p\ket{0}_{M}\bra{0} \otimes \rho^{(0)}_{A}  \nonumber\\
 & + (1-p)\ket{1}_{M}\bra{1} \otimes \rho^{(1)}_{A}) \Bigl) S_{ij}^\dag \nonumber\\
  =& p_0^{1'} \rho^{(0)}_{A} +  p_1^{1'} \rho^{(1)}_{A} .
 \end{align}
 Similarly, we have
 \begin{align}\label{eq: conditional2}
  \rho^{(2')}  =& p_0^{2'} \rho^{(0)}_{A} +  p_1^{2'} \rho^{(1)}_{A} , \nonumber\\
  \rho^{(3')}  =& p_0^{3'} \rho^{(0)}_{A} +  p_1^{3'} \rho^{(1)}_{A} , \nonumber\\
  \rho^{(4')}  =& p_0^{4'} \rho^{(0)}_{A} +  p_1^{4'} \rho^{(1)}_{A} .
 \end{align}
 The coefficients $ p_0^{1'},  p_1^{1'} $ are obtained by 
 \begin{align}\label{eq: cofficients1}
 p_0^{1'}= & \sum_{ij}\Tr_{CM} \ket{+0}_{CM}\bra{+0}  \nonumber\\
 & S_{ij} \Bigl(  \ket{+}_{C}\bra{+} \otimes ( p\ket{0}_{M}\bra{0}  \Bigl)  S^{\dag }_{ji}  \nonumber\\
   =& p(c^4+2c^2s^2 q +\frac{1}{2} s^4 q(1+q) ),  \nonumber  \\
 p_1^{1'}= & \sum_{ij}\Tr_{CM} \ket{+0}_{CM}\bra{+0} \nonumber\\
 &  S_{ij}  \Bigl(  \ket{+}_{C}\bra{+} \otimes (1-p)\ket{1}_{M}\bra{1}  \Bigl)  S^{\dag }_{ji}  \nonumber\\
   =& \frac{1-p}{2} s^2( 4c^2+s^2) q .
 \end{align}
 Similarly, one obtains the coefficients :
 \begin{align}\label{eq: coefficients2}
 p_0^{2'}  =& \frac{p}{2} s^2( 4c^2+s^2)(1-q) , \nonumber\\
  p_1^{2'}
   =& (1-p)\bigl(c^4+2c^2s^2 (1-q) +\frac{1}{2} s^4 (2-q)(1-q) \bigl) , \nonumber\\
    p_0^{3'}= & \frac{p}{2}  s^4q(1-q) , \ \  p_1^{3'}=  \frac{1-p}{2}s^4q  , \nonumber\\
 p_0^{4'}=&  \frac{p}{2}  s^4(1-q)  , \ \ p_1^{4'}=  \frac{1-p}{2}s^4q (1-q) .
 \end{align}

Substituting Eqs.\eqref{eq: conditional1}-\eqref{eq: coefficients2} in Eq.\eqref{eq: CMAstructure}, one can write down the explicit form of the final state of $CMA$
\begin{align}\label{eq: lambda1}
& \rho^{\lambda=1}_{CMA} \nonumber\\ 
=& \ket{-}_C\bra{-} \otimes s^4 \Bigl( \frac{ (1-q)^2}{2} (\sigma_x)_M \rho_{MA} (\sigma_x)_M + q(1-q)  \nonumber\\ 
 &\frac{ \mathbb{I}_M }{2}\otimes \rho_A + \frac{2q-1}{2} (1-p) \ket{0}_M \bra{0} \otimes  \rho^{(1)}_A  \Bigl) \nonumber\\ 
&+ \ket{+}_C\bra{+} \otimes \Bigl( (c^4+ \frac{(1-q)^2}{2}s^4) \rho^{\text{in }}_{MA} +(2c^2s^2+ \frac{s^4}{2}) \nonumber\\
& \tau_M \otimes \rho_A 
+ s^4 \frac{2q-1}{2} p \ket{0}_M \bra{0} \otimes  \rho^{(0)}_A \Bigl) .
\end{align} 

\vspace{0.5cm}
{\bf{Consider $\lambda \in [0,1]$}}. It is straightforward to write down the final state of $CMA$
\begin{align}\label{eq: finalCMA}
  \rho^{\fin}_{CMA}= & \lambda  \rho^{\lambda=1}_{CMA} +(1-\lambda) \ket{0}_c\bra{0} \otimes (c^4\rho_{MA}^\In \nonumber\\
  & +(1-c^4) \tau_M\otimes \rho_A ).
\end{align}
Adding an ancilla system $B$, we have 
\begin{align}\label{eq: finalCMAB}
  &\rho^{\fin}_{BCMA} \nonumber\\
  = & \lambda \ket{0}_B\bra{0} \otimes  \ket{-}_C\bra{-} \otimes s^4 \Bigl( \frac{ (1-q)^2}{2} (\sigma_x)_M \rho_{MA} (\sigma_x)_M  \nonumber\\ 
 &+ q(1-q) \frac{ \mathbb{I}_M }{2}\otimes \rho_A + \frac{2q-1}{2} (1-p) \ket{0}_M \bra{0} \otimes  \rho^{(1)}_A  \Bigl) 
 \nonumber\\ 
&+ \lambda \ket{1}_B\bra{1} \otimes  \ket{+}_C\bra{+} \otimes \Bigl( (c^4+ \frac{(1-q)^2}{2}s^4) \rho^{\text{in }}_{MA}  \nonumber\\
&+(2c^2s^2+ \frac{s^4}{2}) \tau_M \otimes \rho_A 
+ s^4 \frac{2q-1}{2} p \ket{0}_M \bra{0} \otimes  \rho^{(0)}_A \Bigl) \nonumber\\
& +(1- \lambda) \ket{2}_B\bra{2} \otimes  \ket{0}_C\bra{0} \otimes  (c^4\rho_{MA}^\In 
   +(1-c^4) \tau_M\otimes \rho_A ).
\end{align}

To bound the final mutual information between $A$ and $CM^\text{fin}$, we apply the quantum data-processing inequality and the chain rule 
\begin{align}\label{eq: chainrule}
&  I(A: CM^{ \text{fin} }) \leq   I(A: BCM^{ \text{fin} }) \nonumber\\
  &=I(A:B)+ I(A: CM^{\text{fin} } |B ) .
 \end{align} 

First, we bound the term $ I(A:B ) $.  Tracing out $CM$ from Eq.\eqref{eq: finalCMAB}, we obtain the state of $BA$ as follows
\begin{align}
  &\rho^{\fin}_{BA} \nonumber\\
=& \lambda \Bigl(s^4 \frac{ 1-q^2}{2} \ket{0}_B\bra{0} +  (c^4+ \frac{1+(1-q)^2}{2}s^4+2c^2s^2 ) \nonumber\\
&  \ket{1}_B \bra{1} \Bigl) \otimes \rho_A + \lambda \frac{2q-1}{2} s^4  \Bigl( (1-p)\ket{0}_B \bra{0}  \otimes  \rho^{(1)}_A  \nonumber\\
&+ p \ket{1}_B\bra{1} \otimes  \rho^{(0)}_A \Bigl)+ (1-\lambda)  \ket{2}_B\bra{2} \otimes  \rho_A.
\end{align}
Adding an ancilla system $D$ to $ \rho^{\text{fin}}_{BA} $, we have 
\begin{align}\label{eq: DCA}
& \rho^{\text{fin}}_{DBA} \nonumber\\
=&  \ket{0}_D\bra{0} \otimes \Bigl( \lambda s^4 \frac{ 1-q^2}{2} \ket{0}_B\bra{0} + \lambda (c^4+ \frac{1+(1-q)^2}{2}s^4 \nonumber\\
&+2c^2s^2 )  \ket{1}_B\bra{1} +(1-\lambda)  \ket{2}_B\bra{2} \Bigl) \otimes \rho_A  + \lambda \ket{1}_D\bra{1} \nonumber\\
&\otimes  (q-\frac{1}{2} ) s^4  \Bigl( (1-p)\ket{0}_B\bra{0}  \otimes  \rho^{(1)}_A + p \ket{1}_B\bra{1} \otimes  \rho^{(0)}_A \Bigl)  .
\end{align}
It is straightforward to prove that $ I(A:D)_{  \rho^{\fin}_{DBA}}=0$. Then, one can calculate
\begin{align}\label{eq: mutualAC}
& I(A: B)_{ \rho^{\fin}_{BA}} \leq  I(A: DB)_{\rho^{\fin}_{DBA}} \nonumber\\
= & I(A:D)_{  \rho^{\text{fin}}_{DBA}} + I(A: B|D)_{\rho^{\text{fin}}_{DBA}} \nonumber\\
= &  \lambda s^4 (q- \frac{1}{2}) I(A : M^\In ),
\end{align}
where the quantum data-processing inequality is used in the first inequality.

Next, we proceed to bound the second term $ I(A: CM^{\text{fin} } |B ) $. According to the definition of conditional mutual information, one can write down 
\begin{align}
&I(A: CM^{\text{fin} } | B) =   P_0 I(A:CM^{\text{fin} } | \ket{0}_B)  \nonumber\\
& +P_1 I(A: CM^{\text{fin} } | \ket{1}_B) + P_2 I(A: CM^{\text{fin} } | \ket{2}_B) ,
\end{align}
where 
\begin{align}
  P_0 & = \lambda (1-  \frac{ (1-q^2)+(2q-1)(1-p)}{2} s^4 ) , \nonumber\\
   P_1 &=  \lambda -  P_0 ,P_2 = 1- \lambda . 
 \end{align}
 
We then introduce an ancilla system $D_{0}$ for $ \rho_{ ACM| \ket{0}_B}$ such that 
 \begin{align}
& \rho_{D_0 ACM| \ket{0}_B} \nonumber\\
=& \frac{\lambda}{   P( \ket{+}_C )} \ket{+}_c\bra{+} \otimes \Bigl( (c^4+\frac{(1-q)^2}{2}s^4) \ket{0}_{D_0}\bra{0} \otimes \rho^{\text{in }}_{MA} \nonumber\\ 
&+(2c^2s^2+ \frac{s^4}{2})\ket{1}_{D_0}\bra{1} \otimes  \tau_M \otimes \rho_A  + \frac{(2q-1)p s^4}{2}  \nonumber\\ 
&  \ket{2}_{D_0} \bra{2}\otimes \ket{0}_M \bra{0} \otimes  \rho^{(0)}_A \Bigl).
\end{align}
Tracing out $CM$, one obtains
 \begin{align}
& \rho_{D_0  A| \ket{0}_B} \nonumber\\
=& \frac{\lambda }{   P_0}\Bigl( (c^4+\frac{(1-q)^2}{2}s^4) \ket{0}_{D_0}\bra{0} \otimes \rho_{A} +(2c^2s^2+ \frac{s^4}{2})\nonumber\\
&\ket{1}_{D_0}\bra{1}  \otimes \rho_A + \frac{(2q-1)p s^4}{2}  \ket{2}_{D_0} \bra{2} \otimes  \rho^{(0)}_A \Bigl) .  \nonumber
\end{align}
The mutual information between $A$ and $D_0$ is 
\begin{align}
I(A: D_0 )=& S(p_0 \rho_A+(1-p_0) \rho^{(0)}_A ) \nonumber\\
&- p_0S( \rho^{}_A)-(1-p_0) S(\rho^{(0)}_A ) ,
\end{align} 
where $p_0 :=\lambda (1+\frac{(q^2-2q) s^4}{2}) /P_0 )$. 
Then, 
\begin{align} \label{eq: mutualAM+}
& I(A:CM)_{\rho_{ ACM| \ket{0}_B}} = I(A:M)_{\rho_{ ACM| \ket{0}_B}}  \nonumber\\
 \leq & I(A: D_0M)_{\rho_{ D_0 ACM| \ket{0}_B}} 
= I(A: D_0 )+I(A: M|D_0) \nonumber\\
=& \frac{\lambda}{P_0} ( c^4+ \frac{ (1-q)^2}{2  } s^4)  I(A: M^\text{in}) + I(A: D_0 ) ,
\end{align}
where the quantum data-processing inequality is used in the first inequality. 

Again, we introduce an ancilla system $D_{1}$ for $ \rho_{C AM| \ket{1}_B}$ such that 
\begin{align}
  & \rho_{D_1 CMA | \ket{1}_B} \nonumber\\ 
 = & \frac{ \lambda s^4}{   P_1}\Bigl( \frac{ (1-q)^2}{2}  \ket{0}_{D_1}\bra{0} \otimes  (\sigma_x)_M \rho_{MA} (\sigma_x)_M  + q(1-q)  \nonumber\\
 &\ket{1}_{D_1}\bra{1} \otimes \frac{ \mathbb{I}_M }{2}\otimes \rho_A  + \frac{2q-1}{2} (1-p)  \ket{2}_{D_1}\bra{2} \otimes \nonumber\\
 &\ket{0}_M \bra{0} \otimes  \rho^{(1)}_A \Bigl) \otimes \ket{-}_c\bra{-}. 
\end{align}
Tracing out $CM$, one obtains 
\begin{align}
  & \rho_{D_1 A | \ket{1}_B} \nonumber\\ 
 = & \frac{ \lambda s^4}{   P_1}\Bigl( \frac{ (1-q)^2}{2}  \ket{0}_{D_1}\bra{0} \otimes   \rho_{A} + q(1-q)  \nonumber\\
 & \ket{1}_{D_1}\bra{1} \otimes \rho_A + \frac{2q-1}{2} (1-p)  \ket{2}_{D_1}\bra{2} \otimes  \rho^{(1)}_A \Bigl).  \nonumber
\end{align} 
The mutual information between $A$ and $D_1$ is 
\begin{align}
I(A: D_1)=& S(p_1 \rho_A+(1-p_1) \rho^{(1)}_A ) \nonumber\\
&- p_1 S( \rho_A)-(1-p_1) S(\rho^{(1)}_A ) ,
\end{align} 
where $p_1 := \frac{\lambda (1-q^2)s^4}{2} /P_1$. Similarly, we have
\begin{align}\label{eq: mutualAM-}
& I(A: CM)_{\rho_{ CMA| \ket{1}_B}} = I(A: M)_{\rho_{ CMA| \ket{1}_B}}   \nonumber\\
 \leq & I(A: D_1C M)_{\rho_{ D_1 CAM| \ket{1}_B }} = I(A: D_1 )+I(A: M|D_1) \nonumber\\
=&  \frac{ \lambda  (1-q)^2s^4}{2  P_1  }  I(A: M^\text{in}) + I(A: D_1).
\end{align}

Similarly, we introduce an ancilla system $D_{2}$ for $ \rho_{ CAM| \ket{2}_B}$ such that 
\begin{align}
   \rho_{ CD_2MA | \ket{2}_B}= &  \ket{0}_C\bra{0} \otimes  \Bigl(c^4 \ket{0}_{D_2}\bra{0} \otimes \rho_{MA}^\In  \nonumber\\
  & +(1-c^4)  \ket{1}_{D_2}\bra{1} \otimes\tau_M\otimes \rho_A \Bigl). 
\end{align}
Then, the mutual information $ I(A: CM^{\text{fin} } | \ket{2}_B)$ is upper bounded by
\begin{align}\label{eq: mutualAB2}
& I(A: CM^\fin )_{\rho_{ CMA| \ket{2}_B}}    
 \leq  I(A: CD_2 M)_{\rho_{ CD_2 AM| \ket{1}_B }} \nonumber\\
 =& I(A: D_2 )+I(A: M|D_2) =c^4  I(A: M^\text{in}) .
\end{align}

Finally, combining Eq.\eqref{eq: mutualAC}, Eq.\eqref{eq: mutualAM+}, Eq.\eqref{eq: mutualAM-}, Eq.\eqref{eq: mutualAB2} with Eq.~\eqref{eq: chainrule},  we obtain
\begin{align}
 & I(A: CM^{ \text{fin} }) \leq   I(A: BCM^{ \text{fin} }) \nonumber\\
  =&I(A:B)+ I(A: CM^{\text{fin} } |B ) 
 \nonumber\\
  \leq & \lambda (q-\frac{1}{2})s^4I(A:M^{\text{in}}) +   P_0 I(A:M^{\text{fin} } | \ket{0 }_B) \nonumber\\
 &+P_1 I(A:M^{\text{fin} } | \ket{1}_B)+P_2 I(A:M^{\text{fin} } | \ket{2}_B) \nonumber\\
 \leq & ( c^4+ \lambda (q^2-q+\frac{1}{2})s^4)  I(A: M^\text{in}) +   P_0 I(A:D_0) \nonumber\\
  & +   P_1 I(A:D_1)  ,
\end{align}
where $ P_0 : = \lambda( 1-  \frac{ (1-q^2)+(2q-1)(1-p)}{2} s^4 ), P_1:= \lambda-P_0  $ are the probabilities conditioning on $B$ , and 
\begin{align}\label{eq: virtualmutual}
 &I(A: D_{0/1} ) :=  S( p_{0/1} \rho_A+ (1-p_{0/1})  \rho^{(0/1)}_A )  \nonumber\\
 &- p_{0/1} S(\rho_A)-  (1-p_{0/1})  S(\rho^{(0/1)}_A )
\end{align} with  $ p_0:=  \lambda (1+\frac{(q^2-2q) s^4}{2}) /P_0,  p_1 = \lambda \frac{(1-q^2)s^4}{2} /P_1$. 

\end{proof}

{\bf{Infinite temperature case } } In this case, the thermal state is $\tau= \frac{\mathbb{I}}{2}$,  $i.e., q=1/2$. Substituting $q=1/2$ into the bound, one has
\begin{align}
 & P_0 = \lambda (1-  \frac{ 3 }{8} s^4),  P_1 =   \frac{ 3 }{8} \lambda s^4,  p_0 = p_1= 1 .
\end{align}
Thus, according to Theorem~\ref{thm:main}, the mutual information $I(A: CM^\fin)$ is bounded by
\begin{align}
  I(A: CM^{ \text{fin} }) \leq  ( c^4+  \frac{ \lambda }{4}s^4)  I(A: M^\text{in}) .
 \end{align}

{\bf{Zero temperature case:} } In this case, the thermal state is $\tau= \ket{0}\bra{0}$,  $i.e., q=1$. According to Theorem~\ref{thm:main}, the mutual information $I(A: CM^\fin)$ is bounded by
\begin{align}
 & I(A: CM^{ \text{fin} })  \nonumber\\
 \leq & ( c^4+  \frac{ \lambda }{2}s^4)  I(A: M^\text{in}) +   P_0 I(A:D_0),
\end{align}
where $ P_0  = \lambda  (1-  \frac{ 1-p}{2} s^4 ) $ and 
\begin{align}
 I(A: D_{0} ) := & S( p_0 \rho_A+ (1-p_0 )  \rho^{(0)}_A )  \nonumber\\
 &- p_0 S(\rho_A)-  (1-p_0 )  S(\rho^{(0)}_A ) \nonumber
\end{align} with  $ p_0 = \lambda  (1-\frac{ s^4}{2}) /P_0$.

\section{Violation of the first law of thermodynamics}
\label{app:decoherent}

  \begin{figure}    
   \centering
  \includegraphics[width=0.48\textwidth]{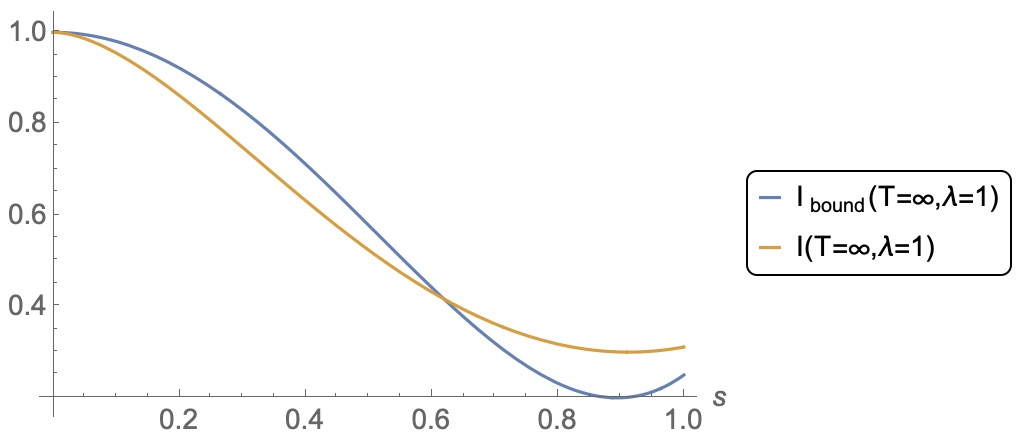} 
    \caption{Demonstration of violation of the bound by the use of an energy-altering channel. The blue line is the bound and the orange line is the mutual information achieved by two switched channels, one being associated with a partial swap unitary and the other a `partial CNOT'. }
        \label{fig:violation} 
\end{figure}

We take the Hamiltonians of the system and the environment to be $H_M=H_E= \sigma_z$.  When the temperature is infinity, the thermal state is then $\tau:= \frac{\mathbb{I}}{2}$. Consider the CNOT gate 
$$ 
V:=  \ket{0}\bra{0} \otimes \mathbb{I} + \ket{1}\bra{1} \otimes \sigma_x,
$$
Then, $e^{i \theta V} = \cos \theta \mathbb{I} +i  \sin \theta V := c \mathbb{I} +i  s V $. One can check that 
\begin{align}
&e^{i \theta V}\left(  \frac{\mathbb{I}_M }{2} \otimes  \frac{\mathbb{I}_E }{2} \right) e^{-i \theta V} =  \frac{\mathbb{I}_M }{2} \otimes   \frac{\mathbb{I}_E}{2}.
\end{align}
Tracing out $E$ in the above equation results in $ \mathcal{N} (  \frac{\mathbb{I}}{2} )= \frac{\mathbb{I}}{2}. $  Thus, $\mathcal{N}$ is a thermal channel and the second law is obeyed.  The set of Kraus operators for $\mathcal{N}$ is
 \begin{align}
  \{ (c \mathbb{I}+ i s \ket{0}\bra{0}),   i s\ket{1}\bra{1} \}. 
  \end{align}
However, one sees that 
$$
[e^{i \theta V}, H_E+H_M] \neq 0,
$$
i.e., the first law of thermodynamics is violated. Recall that the Kraus operators for the quantum partial swap unitary given in the Appendix~\ref{krausdecomp}. We see that some elements of the two sets of Kraus operators do not commute, enabling  information capacity increase. 

As shown in Figure~\ref{fig:violation}, we plotted the mutual information $I(A:CM^\fin)$  achieved by two cases of switched channels, one being associated with a partial swap unitary and the other a `partial CNOT'. The figure shows that our bound is violated in some region by the `partial CNOT' case, consistent with the energy conservation violation of that channel.

\section{Proof of Theorem~\ref{superchan}} \label{proofsuperchan}

Recall that the quantum switch $\mathcal{S}_{\sigma_C}$ is a supermap~\cite{chiribella2008quantum}, which takes two quantum channels $ \mathcal{C}_1,\mathcal{C}_2 $ as input and output another quantum channel $ \mathcal{S}_{\sigma_C} ( \mathcal{C}_1, \mathcal{C}_2 )$.  In the language of Kraus operators~\cite{nielsen2002quantum}, if we denote the Kraus operators of the channel $ \mathcal{C}_1 $ as $\{ K^{(1)}_i\}$ and $ \mathcal{C}_2 $ as $ \{ K^{(2)}_j \} $, the action of the channel $ \mathcal{S}_{\sigma_C} ( \mathcal{C}_1, \mathcal{C}_2 )$ is expressed by
\begin{align}\label{eq: actionofswitch}
 \mathcal{S}_{\sigma_C} ( \mathcal{C}_1, \mathcal{C}_2 ) (  \rho  )= \sum_{ij} S_{ij} (\sigma_C \otimes \rho  ) S_{ij}^\dag
\end{align}
where $\sigma_C$ is the state of the control system, and $S_{ij}$ denotes the Kraus operators for $ \mathcal{S}_{\sigma_C} ( \mathcal{C}_1, \mathcal{C}_2 )$, namely,
\begin{align}
S_{ij}=|0\rangle_C \langle 0| \otimes K^{(2)}_j K^{(1)}_i +  |1\rangle_C \langle 1| \otimes K^{(1)}_i K^{(2)}_j \ .
\end{align}

Theorem \ref{superchan}:  \textit{
If $\mathcal{C}_1(\tau_M)=\mathcal{C}_2(\tau_M)=\tau_M$, where $\tau_M$ is the thermal state on $M$ then, for any state $\sigma_C$, $\Tr_C\mathcal{S}_{\sigma_C} ( \mathcal{C}_1, \mathcal{C}_2 ) ( \tau_M )=\tau_M$. }
\begin{proof}
Denote by $\sigma_C^{ij}$ the matrix element of $i$th row and $j$th column of $\sigma_C$. The action of the channel $\Tr_C\mathcal{S}_{\sigma_C}$ on the thermal state $\tau_M$ is given by 
\begin{align}
&\Tr_C\mathcal{S}_{\sigma_C}(\mathcal{C}_1, \mathcal{ C}_2)(\tau_M) \nonumber\\
=& \Tr_C\sum_{i,j} S_{ij}(\sigma_C \otimes \tau_M) S_{ij}^\dag \nonumber \\
=&  \sigma_C^{00}\sum_{i,j} K^{(2)}_j K^{(1)}_i \tau_M K^{(1) \dag}_i K^{(2) \dag}_j \nonumber \\
&+  \sigma_C^{11} \sum_{i,j} K^{(1)}_i K^{(2)}_j \tau_M K^{(2) \dag}_j K^{(1) \dag}_i \nonumber\\
=& \sigma_C^{00}\mathcal{C}_2\circ \mathcal{C}_1 (\tau_M)+\sigma_C^{11}\mathcal{C}_1\circ \mathcal{C}_2 (\tau_M) \nonumber\\
=&\tau_M,
\end{align}
where $\mathcal{C}_1(\tau_M)=\mathcal{C}_2(\tau_M)=\tau_M$ is used in the last equality.
\end{proof}
  
\section{Free energy of coherence of the control} \label{app:Fcoh}

Denote by $H$ the Hamiltonian of the system. The free energy of coherence of the system is defined as~\cite{lostaglio2015description}
\begin{align}
F_{coh}(\sigma, H) :=kT \Bigl(S( \mathcal{D}_H (\sigma)) - S(\sigma) \Bigl),
\end{align}
where $  \mathcal{D}_H $ is the map that kills all the off-diagonal elements in the energy eigenbasis. The relation between the free energy $F(\sigma, H )$ and the free energy of coherence is 
\begin{align}
F(\sigma, H)&:= \Tr (\sigma H)-kTS(\sigma) \nonumber\\
&=F(  \mathcal{D}_H (\sigma),H)+ F_{coh}(\sigma, H).
\end{align}

Let the Hamiltonian of the system $H=\sigma_z$, and the initial state of the system $\sigma = \lambda \ket{+}\bra{+} + (1-\lambda ) \ket{0}\bra{0}, \lambda \in [0,1]$. Then, $ \mathcal{D}_H(\sigma) = \lambda \frac{ \mathbb{I}}{2} +(1- \lambda) \ket{0}\bra{0}$. The set of eigenvalues of $  \mathcal{D}_H(\sigma) $ is $\{ 1- \frac{\lambda}{2},  \frac{\lambda}{2} \}$. 
The set of eigenvalues of $  \sigma $ is $\{ \frac{1}{2}(1\pm \sqrt{1-2\lambda+\lambda^2})  \}$. 
Then, applying a Taylor expansion and some manipulations, we arrive at 
\begin{align}
F_{coh}(\sigma, H) \geq kT \frac{\lambda^2}{\ln16} .
\end{align}

\section{Interpolating between switch being ON and OFF}
~\label{sec:purecontrol}
The main text utilized a control parameter $\lambda \in [0,1]$ to denote the level of mixture of the state $\sigma_c= \lambda \ket{+}\bra{+} + (1-\lambda) \ket{0}\bra{0}$ of the control. Naturally, when $\lambda$ is closer to one, the control system has more coherence -- regulating the processing orders of the two energy preserving thermal channels. In this Section, we explore whether a second method for gradually turning a switch ON and OFF, used in Ref.~\cite{ebler2018enhanced}, yields the same conclusions. In this scenario, the control state is pure and reads $\ket{ \Psi}_c = \sqrt{ \alpha} \ket{0} + \sqrt{1- \alpha }\ket{1}, \alpha \in [0,1]  $. Here, the switch is maximally turned on for $\alpha=1/2$, such that $\ket{ \Psi}_c =\ket{+}$.

 Following the same procedure  as in the Appendix~\ref{infocapbound} but replacing the control state with $ \ket{\Psi }_c$,  we calculate the final state of the control system $C$, the record of the message $A$ and the system undergoing the channel $M$ as 
  \begin{align}\label{eq: CAM2}
\rho^{\text{ \fin }}_{CMA} =& \mathcal{S}_{\ket{\Psi}_c}( \mathcal{ C}, \mathcal{ C})(\rho_{AM}) = \sum_{i,j} S_{ij} \rho_{AM} S_{ij}^\dag \nonumber \\
=&  \alpha |0\rangle_c \langle 0| \otimes \sum_{i,j} K^{M}_j K^{M}_i \rho_{AM} K^{M \dag}_i K^{M \dag}_j \nonumber \\
+ & (1-\alpha) |1\rangle_c \langle 1| \otimes \sum_{i,j} K^{M}_i K^{M}_j \rho_{AM} K^{M \dag}_j K^{M \dag}_i \nonumber \\
+ & \sqrt{ \alpha(1- \alpha)} |0\rangle_c \langle 1| \otimes \sum_{i,j} K^{M}_j K^{M}_i \rho_{AM} K^{M \dag}_j K^{M \dag}_i \nonumber \\
+ & \sqrt{ \alpha(1-\alpha) } |1\rangle_c \langle 0| \otimes \sum_{i,j} K^{M }_i K^{M}_j \rho_{AM} K^{M \dag}_i K^{M \dag}_j .
\end{align} 
 Note that here we have used the shorthand $K^M_i:=\mathbb{I}_A \otimes (K_i)_M$. We then calculate the two summations appearing in the above expression:
  \begin{align}
&\sum_{i,j} K^{M}_j K^{M}_i \rho_{AM} K^{M \dag}_i K^{M \dag}_j = c^4 \rho^\In_{AM}+(1-c^4) \rho_{A} \otimes \tau_M ,\nonumber\\
 & \sum_{i,j} K^{M}_i K^{M}_j \rho_{AM} K^{M \dag}_i K^{M \dag}_j = (c^2 +s^2 q)^2 \rho^{(0)}_A \otimes p \ket{0}_M\bra{0}  \nonumber\\
 &+ (c^2 +s^2(1- q))^2 \rho^{(1)}_A \otimes (1-p) \ket{1}_M\bra{1} + 2c^2s^2(1-q) \nonumber\\
 & \rho^{(0)}_A \otimes p \ket{1}_M\bra{1} +2c^2s^2 q \rho^{(1)}_A \otimes (1-p) \ket{0}_M\bra{0}  \nonumber\\
&= (c^2 +s^2(1- q))^2 \rho^\In_{AM}  + 2c^2s^2(1-q)  (\sigma_x)_M \rho^\In_{AM} (\sigma_x)_M \nonumber\\
& + (ps^4+2s^2c^2)(2q -1) \rho^{(0)}_A \otimes  \ket{0}_M\bra{0}.
\end{align} 
 Substituting the two summations into Eq. (A91), we obtain
\begin{align}\label{eq:CMAsta2}
&\rho^{\text{ \fin }}_{CMA} \nonumber\\
=&  (\alpha |0\rangle_c \langle 0| +  (1-\alpha) |1\rangle_c \langle 1|)\otimes \Bigl(c^4 \rho^\In_{AM}   +(1-c^4) \rho_{A} \otimes \tau_M \Bigl) \nonumber\\
& +  \sqrt{ \alpha(1- \alpha)}( |0\rangle_c \langle 1|+|1\rangle_c \langle 0| ) \otimes  \Bigl( (c^2 +s^2(1- q))^2 \rho^\In_{AM}  \nonumber\\
& + 2c^2s^2(1-q)  (\sigma_x)_M \rho^\In_{AM} (\sigma_x)_M  + (ps^4+2s^2c^2)(2q -1) \nonumber\\
& \rho^{(0)}_A \otimes  \ket{0}_M\bra{0} \Bigl) .
\end{align}

 We can break the general control system state into three cases:

\begin{enumerate}

\item[i)]   {\em Switch is fully ON: $\alpha=\frac{1}{2}$, i.e,} $\ket{\Psi}_c =\ket{+} $. By inspection Eq.(A93) and Eq.(A514) denote the same state, that is, this corresponds to $\lambda=1, \sigma_c =\ket{+}\bra{+}$. Thus our bound~\eqref{thm:main}  is directly applicable to this case.   

\item[ii)]  {\em Switch is fully OFF: $\alpha=0/1$, i.e,} $\ket{\Psi}_c =\ket{0}/\ket{1} $. Here, there is no superposition of orders. This corresponds to the switch being off, i.e., $\lambda=0, \sigma_c =\ket{0}\bra{0}$.  By inspection Eq.~(A93) and Eq.~(A514) denote the same state in this case. Thus our bound is directly applicable to this case as well.   

\item[iii)]   {\em Other values of $\alpha $.} The final state of the compound system $CMA$ expressed by Eq.~(A93) is not identical to that of Eq.~(A514). Nevertheless, the mutual information after the interaction is very similar whether the control system is taken as pure or mixed, as shown in Figure A2. 
\end{enumerate}
We conclude that the two possible choices of the general state of the control system lead to qualitatively and quantitatively very similar behaviour and focus on the case of the mixed control as it gives neater mathematical expressions.

  \begin{figure}    
   \centering
  \includegraphics[width=0.48\textwidth]{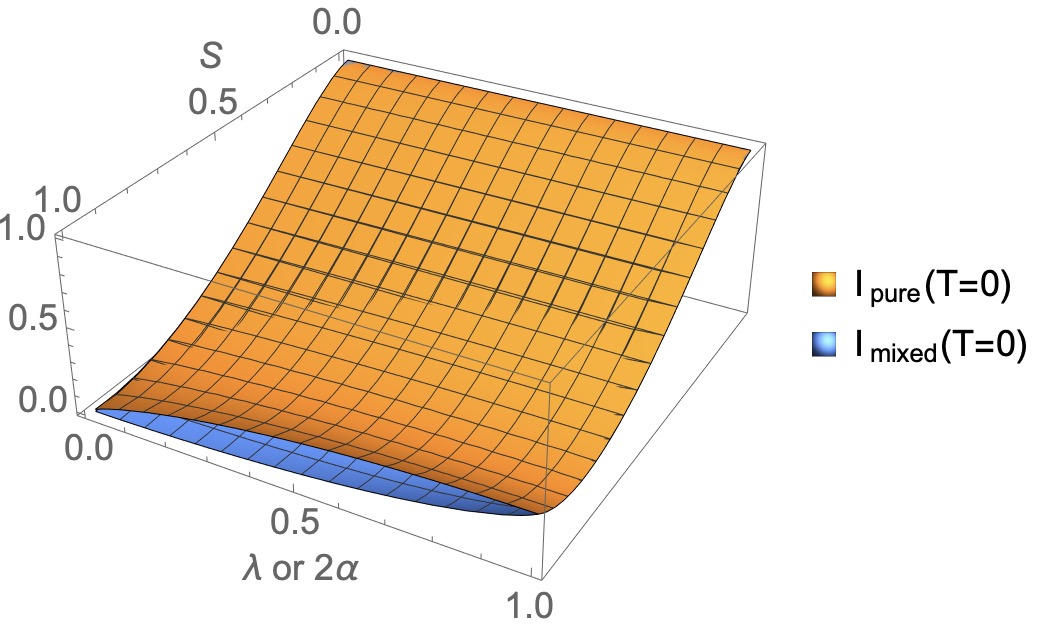} 
    \caption{The mutual information between memory and post-interaction system plus control is very similar whether the control is in the mixed state $\sigma_c= \lambda \ket{+}\bra{+} + (1-\lambda) \ket{0}\bra{0}, \lambda \in [0,1]$ vs. the pure state $\ket{ \Psi}_c = \sqrt{ \alpha} \ket{0} + \sqrt{1- \alpha }\ket{1}, \alpha \in [0,1/2]  $. The plot is for the case of the temperature $T=0$ (we observe similar behaviour for all temperatures). $s$ is a thermalisation strength parameter.}
        \label{fig:control} 
\end{figure}

\end{document}